\renewcommand{\section}{\@startsection%
{section}%
{1}%
{0em}%
{1.7em}%
{1.2em}%
{\normalfont\large\centering\bfseries}}
\renewcommand{\@seccntformat}[1]%
{\csname the#1\endcsname.\hspace{0.5em}}
\numberwithin{equation}{section}
\newtheorem{theorem}{Theorem}[section]
\newtheorem{proposition}{Proposition}[section]
\newtheorem{lemma}{Lemma}[section]
\newtheorem{corollary}{Corollary}[section]
\theoremstyle{definition}
\newtheorem{definition}{Definition}
\newtheorem{remark}{Remark}
\newcommand{\inner}[2]{\left\langle#1,#2\right\rangle}
\newcommand{\reals}{\mathbb{R}}
\newcommand{\nats}{\mathbb{N}}
\newcommand{\ie}{\emph{i.\,e.} }
\newcommand{\cf}{\emph{cf.} }
\newcommand{\bea}{\begin{eqnarray}}
\newcommand{\eea}{\end{eqnarray}}
\newcommand{\beao}{\begin{eqnarray*}}
\newcommand{\eeao}{\end{eqnarray*}}
\newcommand{\llb}{\left\lbrace}
\newcommand{\rrb}{\right\rbrace}
\newcommand\R{{\mathbb R}}
\newcommand\N{{\mathbb N}}
\newcommand\C{{\mathbb C}}
\renewcommand{\H}{\mathcal{H}}
\newcommand{\T}{{T}}
\newcommand{\HH}{{\mathcal{H}\oplus\mathcal{H}}}
\newcommand\ip[2]{\langle {#1},{#2} \rangle}
\newcommand\no[1]{\| {#1} \|}
\newcommand\mm[1]{{(#1)^{\perp}\oplus(#1)^{\perp}}}
\newcommand\pE[1]{{#1}_{\tiny{\odot}}}
\newcommand\Nk[2]{\pmb{\mathsf{N}}_{#1}({#2})}
\newcommand\oP[2]{{#1}\oplus {#2} }
\newcommand\oM[2]{{#1}\ominus {#2} }
\newcommand\cA[1]{\mathcal {#1}}
\newcommand\rE[1]{_{|_{#1}}}
\newcommand\vE[2]{{\begin{pmatrix}{#1}\\{#2}\end{pmatrix}}}
\DeclareMathOperator{\im}{Im}
\DeclareMathOperator{\dom}{dom}
\DeclareMathOperator{\ran }{ran }
\DeclareMathOperator{\mul}{mul}
\DeclareMathOperator{\Span}{span}
\DeclareMathOperator{\supp}{supp}
\DeclareMathOperator{\rank}{rank\,}
\begin{document}
\begin{titlepage}
\title{Perturbation theory for selfadjoint relations\\
\footnotetext{%
Mathematics Subject Classification(2010):
47A06;  
47B25; 
47A55  
}
\footnotetext{%
Keywords:
Closed linear relations;
Dissipative and Selfadjoint relations;
Weyl perturbation theory
}}
\author{%
\textbf{Josu\'e I. Rios-Cangas}
\\
\small Departamento de F\'{i}sica Matem\'{a}tica\\[-1.5mm]
\small Instituto de Investigaciones en Matem\'aticas Aplicadas y en Sistemas\\[-1.5mm]
\small Universidad Nacional Aut\'onoma de M\'exico\\[-1.5mm]
\small C.P. 04510, Ciudad de M\'exico\\[-1.5mm]
\small \texttt{jottsmok@gmail.com}
\\[2mm]
\textbf{Luis O. Silva}\thanks{Supported by UNAM-DGAPA-PAPIIT IN110818 and SEP-CONACYT
CB-2015 254062. Part of this work was carried out while
  on sabbatical leave from UNAM with the support of PASPA-DGAPA-UNAM}
\\
\small Department of Mathematical Sciences\\[-1.5mm]
\small University of Bath\\[-1.5mm]
\small Claverton Down, Bath BA2 7AY, U.K.\\[-1.5mm]
\small and\\[-1.5mm]
\small Departamento de F\'{i}sica Matem\'{a}tica\\[-1.5mm]
\small Instituto de Investigaciones en Matem\'aticas Aplicadas y en Sistemas\\[-1.5mm]
\small Universidad Nacional Aut\'onoma de M\'exico\\[-1.5mm]
\small C.P. 04510, Ciudad de M\'exico\\[-1.5mm]
\small \texttt{silva@iimas.unam.mx}
}
\date{}
\maketitle
\vspace{-4mm}
\begin{center}
\begin{minipage}{5in}
  \centerline{{\bf Abstract}} \bigskip
We study Weyl-type perturbation
theorems in the context of linear closed relations. General results on
perturbations for dissipative relations are established. In the
particular case of selfadjoint relations, we study finite-rank
perturbations and carry out a detailed analysis of the corresponding
changes in the spectrum.
\end{minipage}
\end{center}
\thispagestyle{empty}
\end{titlepage}
\section{\textbf{Introduction}}
\label{sec:intro}
Linear closed relations in a Hilbert space $\mathcal{H}$ are subspaces
(\ie closed linear sets) of $\H\oplus\H$. A particular realization of
a closed linear relation is the graph of a closed linear operator and,
since the operator can be identified with its graph, we consider
relations as generalizations of operators.

In this work, we study perturbation theory for relations when the
essential spectrum is preserved after the relation is submitted to
certain types of perturbations. There are various perturbation
theorems on the stability of the essential spectrum of operators (\cf
\cite[Thm.\,4.5.35]{MR0407617}, \cite[Thm.\,9.1.4]{MR1192782}) related
to the classical result on perturbations in the  selfadjoint case
by H. Weyl \cite{weyl-essential-spectrum-stability}. These theorems
are known as Weyl-type perturbations theorems. Some results of this
kind have been obtained for relations being perturbed by relations
(see \cite[Chap.\,7]{MR1631548} and
\cite{MR3398739,MR2993376,MR3255523,MR3860685})

There are several ways of extending the notions related to Weyl-type
perturbation theory from the operator setting to the one of
relations. In the first place, the essential spectrum of a relation
has to be defined.  One can use a general approach to the matter and
define the essential spectrum for relations as it is done for closed
operators in Banach spaces (see \cite[Sec.\,4.5.6]{MR0407617}, and
\cite[Chap.\,7]{MR1631548} in the relation setting). In connection
with this approach, there are various different definitions of the
essential spectrum for operators (see \cite[Chap.\,9]{MR929030}) which
can be extended to the case of relations \cite{MR3255523}. All these
notions reduce to the definition we use here (see
Definition~\ref{def:birman-definition}) in the case of selfadjoint
relations. Thus, taking into account that the main goal of this paper
is the detailed analysis of the spectrum of selfadjoint relations
under selfadjoint finite rank perturbations, we restrict ourselves to
Definition~\ref{def:birman-definition}. This definition of the
essential spectrum is used in \cite{MR3398739,MR2993376,MR3860685},
where perturbation theory for relations is treated in a way similar to
ours. The results given in this work concerning dissipative relations
and the fine-tuning perturbation analysis related to the rank of the
perturbation go beyond the results of
\cite{MR3398739,MR2993376,MR3860685}.

Apart from extending the definition of the essential spectrum, it is
necessary to generalize the concepts of relatively bounded and
relatively compact perturbations from operators to relations (see
\cite[Chap.\,7]{MR1631548} and \cite{MR3255523}). In this work, we
touch upon additive perturbations of relations only
tangentially. Instead, we approach the matter more generally by
studying the difference of the resolvents of relations when this
difference is a compact operator (for an even more general setting see
\cite{MR2481074}).

The main goal of this paper is the fine-tuning spectral analysis of
selfadjoint relations such that the difference of their resolvents is
a finite-rank operator. We develop the theory on the basis of
\cite[Chap.\,9.]{MR1192782} and extend some classical Weyl-type
perturbation results for operators to selfadjoint relations. To this
end, various results on finite-rank perturbations are obtained in
Section~\ref{sec:finitePerturbations} for a setting more general than
the selfadjoint one, namely, for dissipative relations. This is done
this way for future developments on the spectral theory of dissipative
and accretive relations, and, within the latter class, sectorial
relations (see in \cite{new-hassi} recent results on the matter).  In
the selfadjoint case, the results of
Section~\ref{sec:finitePerturbations} admit substantial refinements as
is shown in Section~\ref{sec:Perturbations}. First, we establish
Theorem~\ref{theorem-equal-essential-spectra} which is a general
result first proven in \cite[Thm.\,5.1]{MR3398739}. In contrast to
\cite[Chap.\,7]{MR1631548} and \cite{MR3255523}, this result does not
require any condition on the multivalued part of the
relations. Theorem~\ref{eqespecd} gives bounds on the shift of the
spectrum in terms of the rank of the difference of the
resolvents. Theorem~\ref{hatrels} deals with the spectra of
selfadjoint extensions of a symmetric relation with finite deficiency
indices (\cf \cite{MR2993376}). The last corollaries of
Section~\ref{sec:Perturbations} give conditions for spectral
interlacing.

Our results are of practical importance in the various theoretical
applications that the spectral theory of relations has; for instance
in the extension and spectral theories of operators
\cite{riossilva-expos} and the theory of canonical systems
\cite{MR1759823}. The last section provides examples related to the
spectral theory of operators.

\section{\textbf{On linear relations}}
\label{sec:linear-relations}
We consider a separable Hilbert space  $(\H,\, \ip{\cdot}{\cdot})$
with inner product antilinear in its left argument. Throughout this work, any linear set $T$
in $\HH$
is called a linear relation. Here, $\HH$ denotes the orthogonal sum of
two copies of the Hilbert space $\H$ (see \cite[Sec.\, 2.3]{MR1192782}).
Define the sets
\begin{align*}
 \dom \T:=\llb f\in \H\,:\ \vE fg\in T\rrb,&\quad
  \ran\T:=\llb g\in \H\,:\ \vE fg\in T\rrb,\\[1mm]
  \ker\T:=\llb f\in \H\,:\ \vE f0\in T\rrb,&\quad
  \mul\T:=\llb g\in \H\,:\ \vE 0g\in T\rrb,
\end{align*}
which turn out to be linear sets in $\H$. Moreover, if $\T$
is closed, then $\ker\T$ and $\mul\T$ are subspaces (i.\,e. closed
linear sets) of $\H$.

Given linear relations $T$ and $S$, and $\zeta\in \C$, we consider
the linear relations:
\beao
T+S&:=\llb\vE f{g+h}\ :\ \vE fg\in T,\ \ \vE fh\in S\rrb\quad
\zeta T:=\llb \vE f{\zeta g}\ :\ \vE fg\in T\rrb\\
ST&:=\llb \vE fk\ :\ \vE fg\in T,\ \ \vE gk\in S\rrb\qquad\T^{-1}:=\llb\vE gf\,:\, \vE fg\in T \rrb\,.
\eeao
We assume that the symbols $\dotplus$, $\oplus$, and $\ominus$ have
their standard meaning, i.\,e.,
\begin{equation}
  \label{eq:sets-relations-operations}
  \begin{split}
T\dotplus S&=\llb \vE {f+h}{g+k}\ :\ \vE fg\in T,\ \vE hk\in S,
\mbox{ and  }
T\cap S=\llb\vE00\rrb\rrb\,.\\
 \oP TS&=T\dotplus S\,, \mbox{ with }\, T\subset S^\perp. \\
 \oM TS &= T\cap S^\perp\,.
\end{split}
\end{equation}
The symbol $\oplus$ in this context strictly speaking differs from its
meaning in the expression $\HH$ given above. It will cause no
confusion to use the same symbol.


The adjoint of $T$ is defined by
\begin{align*}
 \T^*:=\llb\vE hk\in \HH\ :\ \ip kf=\ip hg,\ \ \forall \vE fg\in \T\rrb,
\end{align*}
which is a closed relation with the properties:
\begin{align}\label{poHs}
\T^*&=(-T^{-1})^{\perp},&S\subset  T&\Rightarrow T^*\subset S^*,\nonumber\\
\T^{**}&=\overline T,& (\alpha\T)^*&=\overline{\alpha}\T^*,\,\mbox{ with } \alpha\neq0\,,\\
(T^*)^{-1}&=(T^{-1})^*,&\ker \T^*&=(\ran \T)^{\perp}.\nonumber
\end{align}
From \eqref{poHs}, one obtains
\begin{align}\label{arens}
 \overline {\dom T}=\overline {\ran T^{-1}}=(\ker (\T^{-1})^{*})^\perp=(\mul T^{*})^{\perp}.
\end{align}
We call a linear relation $T$ bounded if there exists $C>0$ such that
$\no g\leq C\,\no f$, for all $\vE fg\in T$. Note that according to
this definition every bounded linear relation is a bounded linear
operator. It is worth remarking that, in the context of linear
relations, there are other ways of defining boundedness for relations
(see \cite{MR1631548}) so that a bounded relation is not necessarily
an operator.

The quasi-regular set
$\hat\rho(T)$ of the linear relation $T$ is defined by \beao \hat\rho(T):=\{\zeta \in \C\
:\ (T-\zeta I)^{-1}\mbox{ is bounded}\}\,.\eeao It is straightforward
to verify that this set is open and for every $\zeta\in\hat\rho(T)$ it
follows that $\ran (T-\zeta I)$ is closed if and only if $T$ is closed
\cite[Prop.\,2.4]{riossilva-expos}.  Furthermore, for any $\zeta\in\hat\rho(T)$,
the number \bea\label{ranTp} \eta_{\zeta}(T):=\dim [\ran (T-\zeta
I)]^{\perp}\eea is constant on each connected component of
$ \hat\rho(T)$. We call $\eta_{\zeta}(T)$ the deficiency index of $T$.
We define the deficiency space $\Nk\zeta T$ as
follows.  \bea\label{defspace} \Nk\zeta T:=\llb \vE f{\zeta f}\in
T\rrb,\,\,\, \zeta\in\C.  \eea Note that \eqref{defspace} is a linear
bounded relation which is closed if $T$ is closed. Moreover, by
\eqref{poHs} \beao \eta_{\zeta}(T)=\dim \ker (T^{*}-\overline \zeta
I)=\dim \Nk {\overline\zeta }{T^{*}}\,.\eeao If
$\eta_{\zeta}(T)=0$, then $ (T-\zeta I)^{-1}\in \cA B(\H)$,
where $\cA B(\H)$ denotes the class of all bounded operators having
the whole space $\H$ as their domain.

Define the regular set $\rho(T)$ of the linear relation $T$ by \beao
\rho(T):=\{\zeta \in \C\ :\ (T-\zeta I)^{-1}\in\cA B(\H)\}\,.\eeao
Note that if the linear relation is not closed, then the regular set
is empty. Clearly, the regular set is a subset of the quasi-regular
set and it is also open. For a relation $T$, we consider the sets
\begin{align*}
\sigma(T)&:=\C\backslash \rho(T), &\mbox{(spectrum)}\\
\hat\sigma(T)&:=\C\backslash \hat\rho(T),&\mbox{(spectral core)}\\
\sigma_p(T)&:=\{\zeta \in \C\ :\ \ker (T-\zeta I)\neq \{0\}\},&\mbox{(point spectrum)}\\
\sigma_p^{\infty}(T)&:=\{\zeta \in \sigma_{p}(T)\ :\ \dim\ker (T-\zeta I)=\infty\},&\mbox{(point non-discrete spectrum)}\\
\sigma_c(T)&:=\{\zeta \in \C\ :\ \ran (T-\zeta I)\neq \overline{\ran (T-\zeta I)}\}.&\mbox{(continuous spectrum)}
\end{align*}
As in the case of operators, one has \bea\label{eq:kerspec}
\sigma_{p}(T)\cup\sigma_{c}(T)=\hat\sigma(T)\,.\eea

For any two linear relations $T$ and $S$ in $\HH$, define the linear
relation $T_{S}$ in the Hilbert space $\mm {\mul S}$ (here $\oplus$
has the same meaning as in $\HH$) by
\bea\label{redrel} T_{S}:=T\cap\mm{\mul S}.\eea If $T$ is closed, then
$T_{S}$ is closed and if $T$ is an operator, then $T_{S}$ is an
operator. Besides
$(T_{S})^{-1}=(T^{-1})_{S}.$

It is useful to decompose a closed relation $T$ as follows $T=\oP{\pE T}{T_{\infty}}$, where
\begin{align*}
T_{\infty}&=\llb\vE 0g\in T\rrb,\\
\pE T&=\oM{T}{T_{\infty}}
\end{align*}
are closed linear relations called the multivalued part and the
operator part of $T$, respectively.

\begin{lemma}\label{lem:depolr00}
If $T$ is a closed relation such that
$\dom T\subset(\mul T)^{\perp}$, then
\begin{align*}
T-\zeta I=\oP{(T_{T}-\zeta I)}{T_{\infty}}\,.
\end{align*}
\end{lemma}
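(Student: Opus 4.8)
The plan is to reduce the claim to the orthogonal decomposition $T=\oP{\pE T}{T_\infty}$ already recorded above, by first identifying $T_T$ with the operator part $\pE T$ under the standing hypothesis, and then checking that translation by $-\zeta I$ leaves $T_\infty$ untouched while acting only on $\pE T$, all the while preserving orthogonality.

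First I would pin down $T_T$. Since $T_\infty=\llb\vE 0g\,:\,g\in\mul T\rrb$, a vector $\vE uv$ is orthogonal to $T_\infty$ in $\HH$ exactly when $v\in(\mul T)^\perp$, so $T_\infty^\perp=\H\oplus(\mul T)^\perp$ and hence $\pE T=\oM{T}{T_\infty}=T\cap T_\infty^\perp=\llb\vE fg\in T\,:\,g\in(\mul T)^\perp\rrb$. On the other hand $T_T=T\cap\mm{\mul T}$ carries the two constraints $f\in(\mul T)^\perp$ and $g\in(\mul T)^\perp$. The hypothesis $\dom T\subset(\mul T)^\perp$ makes the first constraint automatic for every $\vE fg\in T$, so that $T_T=\pE T$.

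Next I would distribute the shift. Every element of $T_\infty$ has zero first entry, so $T_\infty-\zeta I=T_\infty$, while each $\vE fg\in\pE T=T_T$ is sent to $\vE f{g-\zeta f}$; writing a generic element of $\oP{\pE T}{T_\infty}$ and subtracting $\zeta I$ then yields $T-\zeta I=(T_T-\zeta I)\dotplus T_\infty$ by linearity of $T$. To upgrade $\dotplus$ to $\oplus$ I must check orthogonality, i.e. that $(T_T-\zeta I)\subset T_\infty^\perp=\H\oplus(\mul T)^\perp$. For $\vE fg\in T_T$ one has $g\in(\mul T)^\perp$ and, crucially, $\zeta f\in(\mul T)^\perp$ because $f\in\dom T\subset(\mul T)^\perp$; hence $g-\zeta f\in(\mul T)^\perp$ and the second entry stays orthogonal to $\mul T$, giving the orthogonal sum. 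The two inclusions defining the set equality can otherwise be read off directly by splitting any $g\in\ran T$ along $\H=(\mul T)^\perp\oplus\mul T$ and using that $T$ is a subspace.

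The main obstacle is precisely this last orthogonality check: translation by $-\zeta I$ does not commute with an arbitrary direct-sum decomposition, since $\zeta I$ has full domain and could a priori mix the two summands. The hypothesis $\dom T\subset(\mul T)^\perp$ is exactly what guarantees $\zeta f\in(\mul T)^\perp$, so that shifting the operator part never produces a component in $\mul T$ and the orthogonal-sum structure is preserved; without it the right-hand side would in general fail to be an $\oplus$-sum.
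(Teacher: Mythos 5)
Your proposal is correct and follows essentially the same route as the paper's proof: first identify $T_{T}=\pE T$ using the hypothesis $\dom T\subset(\mul T)^{\perp}$, then apply the shift to the orthogonal decomposition $T=\oP{\pE T}{T_{\infty}}$. You simply spell out the details the paper leaves implicit, in particular the check that $g-\zeta f$ stays in $(\mul T)^{\perp}$ so that the sum remains orthogonal.
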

\begin{proof}
  Since the domain and the range of $\pE T$ belong to
  $(\mul T)^{\perp}$, it follows from \eqref{redrel} that
  \begin{equation}
    \label{eq:depolr10}
    T_{T}=\pE T\,.
  \end{equation}
Moreover,
  \begin{align*}
T-\zeta I=\oP{(\pE T-\zeta I)}{T_{\infty}}
\end{align*}
from which the assertion follows.
\end{proof}


\section{\textbf{Finite-dimensional perturbation of dissipative relations}}
\label{sec:finitePerturbations}

\begin{definition}
A relation $L$ is called dissipative if for every $\vE fg\in L$,
\begin{equation}
  \label{eq:dissipative-definition}
  \im\ip fg\geq0\,.
\end{equation}
If the equality in \eqref{eq:dissipative-definition} holds, then $L$
is said to be symmetric. Thus $L$ is symmetric if and only if $L\subset L^{*}$.
\end{definition}
As in \cite{riossilva-expos}, one can show that
\bea\label{eq:setdindex}\C_{-}\subset\hat\rho(L)\,,\eea for any closed
dissipative relation $L$. Thus one can consider the deficiency index
of $L$ (see \eqref{ranTp}) in the connected region $\C_{-}$ and denote it by
$\eta_{-}(L)$, \ie
\begin{equation*}
  \eta_{-}(L):=\dim\Nk{\overline
  \zeta}{L^{*}}\,,\quad \zeta\in\C_{-}\,.
\end{equation*}
If $L$ is a closed, symmetric relation, then
$\C\backslash\R\subset\hat\rho(L)$ and hence  one can consider
\begin{equation*}
  \eta_{+}(L):=\dim\Nk{\zeta}{L^{*}}\,,\quad \zeta\in\C_{-}
\end{equation*}
alongside $\eta_{-}(L)$. The index $\eta_{-}$ is an important
characteristic of a dissipative relation, while a symmetric relation is
characterized by the pair $\eta_{\pm}$.
\begin{definition}
  A dissipative relation $L$ is maximal when it is closed and
  $\eta_{-}(L)=0$.
\end{definition}
A maximal dissipative relation does not have proper dissipative
extensions. Note that maximality of a dissipative relation means that
$\C_{-}$ is in the regular set of the relation.
\begin{remark}\label{posr00}
In \cite[Lem.\,2.1]{MR3057107} (see also \cite{riossilva-expos}) it is shown that, for any dissipative relation $L$,
\begin{equation}\label{posr}
\dom L\subset(\mul L)^{\perp}.
\end{equation}
Moreover, it is proven in \cite[Thm.\,2.10]{riossilva-expos} that
\eqref{posr} yields
\begin{align}\label{depolr02}
\begin{aligned}
\sigma(L)&=\sigma(L_{L}),&\sigma_p(L)&=\sigma_p(L_{L}),\\
\hat\sigma(L)&=\hat\sigma(L_{L}),&\sigma_c(L)&=\sigma_c(L_{L})\,.
\end{aligned}
\end{align}
Furthermore, if $L$ is a closed symmetric relation in $\HH$, then by
\eqref{eq:depolr10} one obtains that $L_{L}$ is a closed symmetric
operator in $\mm {\mul L}$.
 \end{remark}

For any relation $T$ in $\cA B(\H)$, we use the notation
$\rank T:=\dim (\ran T)$.  In \cite[Thm\, 2.6.4]{MR1192782} it is
shown that $\rank T=m$ if and only if $\rank T^{*}=m$. Then
\bea\label{eq:T_of_finiterank} \dim(\oM{\H}{\ker T})=\rank T^*=\rank
T\,.  \eea

For maximal dissipative relations $A$ and $L$, and
$\zeta\in \rho(A)\cap\rho(L)$, we define
\begin{align}\label{Tdifreso}
 F:=(L-\zeta I)^{-1}-(A-\zeta I)^{-1}\in\cA B(\H)\,.
\end{align}
Note that if $\rank F=m<\infty$ for some
$\zeta\in \rho(A)\cap\rho(L)$, then the equality holds for every
$\zeta\in \rho(A)\cap\rho(L)$. Since we are mostly interested in the
rank of the operator $F$, its dependence on $\zeta$ is not indicated.

The following assertion relies on the fact that, when $A$ and $V$ are
maximal dissipative relations such that $\dom V=\H$, the relation
$A+V$ is maximal dissipative \cite[Thm.\,3.8]{riossilva-expos}.
\begin{lemma}
  Let $A$, $V$ be maximal dissipative relations such that $\dom
  V=\H$. If $L=A+V$ and $F$ is given by \eqref{Tdifreso}, then
  $\rank F\leq \rank V$.
\end{lemma}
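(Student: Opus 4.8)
The plan is to reduce everything to the second resolvent identity, after first upgrading $V$ to a genuine bounded operator. Since $V$ is dissipative, Remark~\ref{posr00} (specifically \eqref{posr}) gives $\dom V\subset(\mul V)^{\perp}$, and together with $\dom V=\H$ this forces $\mul V=\{0\}$, so $V$ is single-valued. A maximal dissipative relation is closed, hence $V$ is a closed operator defined on all of $\H$, and the closed graph theorem makes it bounded. Therefore $V\in\cA B(\H)$ and $\rank V=\dim(\ran V)$ is meaningful.

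Next I would fix the spectral parameter. By \cite[Thm.\,3.8]{riossilva-expos} the relation $L=A+V$ is maximal dissipative, so by \eqref{eq:setdindex} the set $\rho(A)\cap\rho(L)$ contains $\C_{-}$ and is nonempty. Fix $\zeta$ in it and abbreviate $R_{A}:=(A-\zeta I)^{-1}$ and $R_{L}:=(L-\zeta I)^{-1}$, both lying in $\cA B(\H)$. The key step is the identity
\begin{equation*}
F=-\,R_{A}\,V\,R_{L}\,.
\end{equation*}
To obtain it, take $h\in\H$ and set $x:=R_{L}h$, so that $\vE{x}{h+\zeta x}\in L=A+V$. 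By the definition of the sum in \eqref{eq:sets-relations-operations} there is $\vE{x}{g}\in A$ with $g+Vx=h+\zeta x$, where $Vx$ is meaningful because $\dom V=\H$. Hence $\vE{x}{h-Vx}\in A-\zeta I$, i.e. $x=R_{A}(h-Vx)=R_{A}h-R_{A}VR_{L}h$, which rearranges to $Fh=R_{L}h-R_{A}h=-R_{A}VR_{L}h$.

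Finally the rank bound drops straight out of the identity. Since $R_{L}(\H)\subset\H=\dom V$ and $V$ is an operator, $V\big(R_{L}(\H)\big)$ is a subspace of $\ran V$, so its dimension is at most $\rank V$; and applying the linear map $R_{A}$ cannot raise the dimension. Thus
\begin{equation*}
\rank F=\dim\ran F=\dim R_{A}\big(V R_{L}(\H)\big)\le\dim\big(\ran V\big)=\rank V\,.
\end{equation*}

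The one genuinely delicate point is the resolvent identity: in the relation calculus the sum $A+V$ and the passage to $F=-R_{A}VR_{L}$ must be carried out through the operations of \eqref{eq:sets-relations-operations} rather than by naive operator algebra. The boundedness and single-valuedness of $V$ secured at the outset is precisely what legitimizes the pointwise computation and keeps all the composed objects $R_{A}$, $V$, $R_{L}$, and their product everywhere-defined bounded operators.
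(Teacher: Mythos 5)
Your proof is correct, and it reaches the bound by the same underlying mechanism as the paper --- factoring the resolvent difference through $\ran V$ --- but the execution is genuinely different. The paper never writes down the second resolvent identity; instead it introduces the auxiliary relation $G$ sending $t$ (with $\vE kt\in V$ and $k=(A-\zeta I)^{-1}f$) to $h-k=Ff$, so that in effect $F=G\,V\,(A-\zeta I)^{-1}$, and it then has to prove that $G$ is single-valued by showing that a nonzero element of $\mul G$ would force $\zeta\in\sigma_p(L)$, contradicting $\zeta\in\rho(L)$; the bound $\rank F=\dim\ran G\leq\dim\dom G\leq\rank V$ follows. You instead first pin down that $V$ is an everywhere-defined bounded operator (via \eqref{posr} and the closed graph theorem --- a point the paper uses but leaves implicit when it writes $V\in\mathcal{B}(\H)$) and then derive the explicit identity $F=-(A-\zeta I)^{-1}V(L-\zeta I)^{-1}$, from which the rank estimate is immediate. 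Your route buys transparency: the single-valuedness argument disappears because $(A-\zeta I)^{-1}$ and $(L-\zeta I)^{-1}$ are already known to be operators, and the identity itself is a reusable statement; the paper's route, which factors on the other side, stays a little closer to the bare relation calculus. One cosmetic slip: the sum $A+V$ is defined in the display preceding \eqref{eq:sets-relations-operations}, not in \eqref{eq:sets-relations-operations} itself; this does not affect the argument.
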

\begin{proof}
  Take $\zeta\in\rho(L)\cap\rho(A)$ and consider $\vE f{h-k}\in F$,
  where $\vE fh\in(L-\zeta I)^{-1}$ and $\vE fk\in(A-\zeta
  I)^{-1}$. Since $V\in\mathcal{B}(\H)$, there is $t\in\H$ such that
  $\vE kt \in V$. Define the set \beao G:=\llb\vE t{h-k}\in\HH\,:\,\vE
  kt\in V\text{ and }\vE f{h-k}\in F\rrb\,.  \eeao A simple
  computation shows that $G$ is a linear relation. Let
  $\vE 0{h-k}\in G$, if $h\neq k$, then $\vE k{f+\zeta k}\in L$ and
  $\vE {h-k}{\zeta(h-k)}\in L$, whence
  $\zeta \in \sigma_{p}(L)\subset\sigma(L)$, which is impossible since
  $\zeta\in\rho(L)$.  Thus $G$ is a linear operator and therefore
\begin{equation*}
\dim\ran F=\dim \ran G\leq\dim\dom G\leq\dim\ran V.
\end{equation*}
\end{proof}
Let us move away from the case when $L$ is obtained from $A$ by an
additive perturbation and consider arbitrary maximal dissipative
relations.
\begin{lemma}\label{difexrm}
  If $A$ and $L$ are maximal dissipative extensions of a closed
  dissipative relation $S$ and $F$ is given by
  \eqref{Tdifreso}, then $\rank F\leq \eta_{-}(S)$.
\end{lemma}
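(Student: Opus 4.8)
The plan is to compute at a single point $\zeta\in\C_{-}$. This is legitimate: maximality of $A$ and $L$ means $\C_{-}\subset\rho(A)\cap\rho(L)$, and, as noted after \eqref{Tdifreso}, the rank of $F$ is the same at every point of $\rho(A)\cap\rho(L)$, so no generality is lost. The heart of the argument is the claim that $F$ vanishes identically on the subspace $\ran(S-\zeta I)$. To see this, let $h\in\ran(S-\zeta I)$, say $h=q-\zeta p$ with $\vE pq\in S$. From $S\subset A$ we get $\vE ph\in A-\zeta I$, hence $\vE hp\in(A-\zeta I)^{-1}$; since $\zeta\in\rho(A)$ this inverse is a single-valued bounded operator, which forces $(A-\zeta I)^{-1}h=p$. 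The inclusion $S\subset L$ gives $(L-\zeta I)^{-1}h=p$ by identical reasoning, and therefore $Fh=p-p=0$.

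With this in hand, I would convert the vanishing into the dimension bound. Because $S$ is closed and $\zeta\in\C_{-}\subset\hat\rho(S)$, the range $\ran(S-\zeta I)$ is closed, so that $\H=\ran(S-\zeta I)\oplus[\ran(S-\zeta I)]^{\perp}$ with $\dim[\ran(S-\zeta I)]^{\perp}=\eta_{\zeta}(S)$; and since $\C_{-}$ is connected, this quantity equals $\eta_{-}(S)$. Letting $P$ denote the orthogonal projection onto $[\ran(S-\zeta I)]^{\perp}$, for every $h\in\H$ the decomposition $h=(h-Ph)+Ph$ together with $F(h-Ph)=0$ gives $Fh=FPh$. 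Hence $\ran F=F\bigl([\ran(S-\zeta I)]^{\perp}\bigr)$, and consequently $\rank F=\dim\ran F\leq\dim[\ran(S-\zeta I)]^{\perp}=\eta_{-}(S)$, which is the assertion.

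The step I expect to require the most care is the single-valuedness used to read off $(A-\zeta I)^{-1}h=p$ and $(L-\zeta I)^{-1}h=p$: a priori a vector $h\in\ran(S-\zeta I)$ may admit several representations $h=q-\zeta p$ with $\vE pq\in S$, but because $(A-\zeta I)^{-1}$ is a genuine operator the value $p$ it returns is unambiguous, and the very same operator argument applied to $L$ shows the two resolvents necessarily agree on all of $\ran(S-\zeta I)$. The remaining ingredient to verify explicitly is that $\ran(S-\zeta I)$ is closed, which is precisely the closedness criterion recorded for points of $\hat\rho(S)$ in Section~\ref{sec:linear-relations}; once that is secured, the orthogonal-complement dimension count is routine and the bound follows.
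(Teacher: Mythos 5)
Your proof is correct and follows essentially the same route as the paper: both arguments hinge on the observation that $S-\zeta I\subset(A-\zeta I)\cap(L-\zeta I)$ forces $F$ to vanish on $\ran(S-\zeta I)$, whence $\rank F$ is bounded by $\dim[\ran(S-\zeta I)]^{\perp}=\eta_{-}(S)$. The only cosmetic difference is in the final dimension count: the paper invokes \eqref{eq:T_of_finiterank} to write $\rank F=\dim(\oM{\H}{\ker F})$, whereas you factor $F$ through the orthogonal projection onto $[\ran(S-\zeta I)]^{\perp}$, which is an equally valid (and slightly more elementary) way to conclude.
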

\begin{proof}
  Let $\zeta\in\C_{-}$. Since $S-\zeta I\subset (A-\zeta
  I)\cap(L-\zeta I)$, one has $\ran (S-\zeta I)$ is contained in $\ker
  F$.  Hence by \eqref{eq:T_of_finiterank} one obtains
\begin{align*}
\rank F&=\dim(\oM{\H}{\ker F})\\
&\leq\dim[\oM{\H}{\ran (S-\zeta I)}]=\eta_{-}(S)\,.
\end{align*}
\end{proof}

The following statement is adapted from
\cite[Props.\,4.10,\,4.11]{riossilva-expos}. 

\begin{proposition}\label{mdrofsrn}
  Let $S$ be a closed symmetric relation with finite deficiency index
  $\eta_{-}(S)$. Then for any $\lambda\in
  \hat\rho(S)\cap(\C_{+}\cup\R)$ there exists a unique maximal
  dissipative extension $A$ of $S$ such that $\lambda$ is an
  eigenvalue of multiplicity at most $\eta_{-}(S)$. Furthermore, $A$ is
  selfadjoint if $\lambda\in \R$ while for $\lambda\in\C_{+}$ it
  follows that $A$ is nonselfadjoint.
\end{proposition}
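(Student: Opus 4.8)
The plan is to produce the extension explicitly and then verify each clause for it. The natural candidate is
\begin{equation*}
A:=S\dotplus\Nk{\lambda}{S^{*}}\,.
\end{equation*}
Since $\lambda\in\hat\rho(S)$, the relation $S-\lambda I$ is injective, so $S\cap\Nk{\lambda}{S^{*}}=\{\vE00\}$ and the sum is direct; because $\Nk{\lambda}{S^{*}}$ is finite dimensional and $S$ is closed, $A$ is a closed relation with $A\subset S^{*}$. Using the constancy of the deficiency index on the connected component of $\hat\rho(S)$ that contains $\C_{-}$ together with \eqref{ranTp}, one checks $\dim\Nk{\lambda}{S^{*}}=\dim\ker(S^{*}-\lambda I)=\eta_{-}(S)$ for every $\lambda\in\hat\rho(S)\cap(\C_{+}\cup\R)$. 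A short computation with the defining relation of $S^{*}$ and the symmetry of $S$ shows that for $\vE uv=\vE{f+w}{g+\lambda w}\in A$, with $\vE fg\in S$ and $\vE w{\lambda w}\in\Nk{\lambda}{S^{*}}$, the cross terms are real and
\begin{equation*}
\im\ip uv=(\im\lambda)\,\no w^{2}\geq0\,,
\end{equation*}
so $A$ is dissipative precisely because $\im\lambda\geq0$. Finally, if $\vE u{\lambda u}\in A$ its $S$-part lies in $\ker(S-\lambda I)=\{0\}$, whence $\ker(A-\lambda I)=\ker(S^{*}-\lambda I)$ has dimension $\eta_{-}(S)$.

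For maximality in the case $\lambda\in\C_{+}$ I would prove $\overline\lambda\in\rho(A)$. One has $\ran(A-\overline\lambda I)=\ran(S-\overline\lambda I)+(\lambda-\overline\lambda)\ker(S^{*}-\lambda I)$, where $\ran(S-\overline\lambda I)$ is closed of codimension $\dim\ker(S^{*}-\lambda I)=\eta_{-}(S)$; since $\lambda-\overline\lambda\neq0$, it suffices to show $\ker(S^{*}-\lambda I)\cap\ran(S-\overline\lambda I)=\{0\}$. If $v$ lies in this intersection, writing $v=(S-\overline\lambda I)u$ and pairing $\vE v{\lambda v}\in S^{*}$ with $\vE u{v+\overline\lambda u}\in S$ yields $\no v^{2}=(\overline\lambda-\overline\lambda)\ip vu=0$. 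Hence $\ran(A-\overline\lambda I)=\H$; as $\C_{-}\subset\hat\rho(A)$ by \eqref{eq:setdindex}, this gives $\overline\lambda\in\rho(A)$, so $\eta_{-}(A)=0$ and $A$ is maximal dissipative. Because $\lambda\in\C_{+}$ is an eigenvalue and a selfadjoint relation has real spectrum, $A$ is nonselfadjoint.

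The case $\lambda\in\R$ must be handled differently. Openness of $\hat\rho(S)$ forces a full neighbourhood of $\lambda$ into a single component of $\hat\rho(S)$ meeting both half-planes, so constancy of the deficiency index gives $\eta_{+}(S)=\eta_{-}(S)=:n$. The identity above with $\im\lambda=0$ shows $A$ is symmetric, hence $A\subset A^{*}$. A dimension count then yields selfadjointness: the von Neumann--type decomposition gives $\dim(S^{*}/S)=\eta_{+}(S)+\eta_{-}(S)=2n$, and since the adjoint is a codimension-preserving duality, $\dim(S^{*}/A^{*})=\dim(A/S)=n$, so $\dim(A^{*}/S)=n=\dim(A/S)$ and therefore $A=A^{*}$. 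A selfadjoint relation is maximal dissipative, so this $A$ has all the stated properties with $\lambda$ a real eigenvalue.

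For uniqueness I would first recall that every dissipative extension $B$ of $S$ satisfies $S\subset B\subset S^{*}$ (pair $\vE uv\in B$ with $t\vE fg$ and $it\vE fg$ for $\vE fg\in S$, $t\in\R$, and use dissipativity). In particular $\ker(B-\lambda I)\subset\ker(S^{*}-\lambda I)$ for any such $B$, so the multiplicity of $\lambda$ never exceeds $\eta_{-}(S)$; the uniqueness concerns the extension at which this bound is attained. If $B$ is maximal dissipative and this multiplicity equals $\eta_{-}(S)$, then $\ker(B-\lambda I)=\ker(S^{*}-\lambda I)$, hence $\Nk{\lambda}{S^{*}}\subset B$ and $A\subset B$; as $A$ is already maximal dissipative and $B$ is a dissipative extension of it, $B=A$. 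I expect the maximality of $A$ to be the main obstacle: the exact cancellation $\no v^{2}=(\overline\lambda-\overline\lambda)\ip vu$ is available only for $\lambda\in\C_{+}$, so the real case has to be routed through symmetry plus the deficiency-index bookkeeping, which is exactly where the hypothesis $\hat\rho(S)\cap\R\neq\emptyset$ (forcing $\eta_{+}(S)=\eta_{-}(S)$) enters.
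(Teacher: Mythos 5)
The paper offers no proof of Proposition~\ref{mdrofsrn} at all --- it is imported from \cite[Props.\,4.10, 4.11]{riossilva-expos} --- so there is nothing in-text to compare you against line by line. Your construction $A=S\dotplus\Nk{\lambda}{S^{*}}$ is the natural one (and almost certainly the one in the cited reference), and every step checks out: directness of the sum from injectivity of $S-\lambda I$ at a quasi-regular point; closedness; dissipativity from the cross-term identity $\ip wg=\overline{\lambda\ip fw}$; $\ker(A-\lambda I)=\ker(S^{*}-\lambda I)$ of dimension $\eta_{-}(S)$ by constancy of the deficiency index on the component of $\hat\rho(S)$ containing $\C_{-}$; maximality via surjectivity of $A-\overline\lambda I$ when $\lambda\in\C_{+}$ and via the von Neumann dimension count when $\lambda\in\R$; and uniqueness from $S\subset B\subset S^{*}$ together with the fact that a maximal dissipative relation admits no proper dissipative extension. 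One small redundancy: in the maximality step the intersection computation is unnecessary, since $\ker(S^{*}-\lambda I)=[\ran(S-\overline\lambda I)]^{\perp}$ by \eqref{poHs} and $\ran(S-\overline\lambda I)$ is closed, so $\ran(S-\overline\lambda I)+\ker(S^{*}-\lambda I)=\H$ outright. The only ingredient you invoke that the paper neither proves nor states is the formula $\dim(S^{*}\ominus S)=\eta_{+}(S)+\eta_{-}(S)$ for closed symmetric relations; it is standard and available in the same reference the authors lean on.

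Your reading of the uniqueness clause deserves emphasis, because it is the correct one. Taken literally, ``a unique maximal dissipative extension such that $\lambda$ is an eigenvalue of multiplicity at most $\eta_{-}(S)$'' is false once $\eta_{-}(S)\geq2$: for the trivial symmetric relation $S=\llb\vE00\rrb$ in $\C^{2}$ one has $\eta_{\pm}(S)=2$, and both $iI$ and $\mathrm{diag}(i,0)$ are maximal dissipative extensions having $i$ as an eigenvalue of multiplicity $\leq2$. What is true --- and what you prove --- is that every dissipative extension has multiplicity at most $\eta_{-}(S)$ at $\lambda$, and that the extension \emph{attaining} this bound is unique. That is also the version needed in Corollary~\ref{carespex}, where $\eta_{\pm}(S)=1$ and the two readings coincide.
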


Now we turn to the analysis of the dimension of eigenspaces of
arbitrary maximal dissipative relations such that the difference of
their resolvents is a finite rank operator. To simplify the notation,
for $\lambda\in\C$ and $A$ being a closed dissipative relation, we put
\beao \mu_{A}(\lambda):=\dim\ker(A-\lambda I)\,.  \eeao
\begin{proposition}
\label{prop:birman-for-relations}
Let $A$ and $L$ be maximal dissipative relations such that $F$ given
by \eqref{Tdifreso} is a finite rank operator. If one defines \beao
G_{\lambda}:=\ker(A-\lambda I)\cap\ker(L-\lambda I)\,,  \eeao then
\begin{align}\label{desespd01}
  \dim[\oM{\ker(A-\lambda I)}{G_{\lambda}}]\leq \rank F,
  &\quad\dim[\oM{\ker(L-\lambda I)}{G_{\lambda}}]\leq \rank F\,,\\
\label{desespd02}
  \mu_A(\lambda)-\rank F\leq& \mu_L(\lambda)\leq \mu_A(\lambda)+\rank F\,.
\end{align}
\end{proposition}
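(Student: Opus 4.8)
The plan is to transfer the entire question to the bounded resolvent operators, where it becomes a finite–rank linear-algebra statement. Since $A$ and $L$ are maximal dissipative, $\C_{-}\subset\rho(A)\cap\rho(L)$, so one may fix $\zeta\in\C_{-}$ with $\zeta\neq\lambda$ (if $\mu_{A}(\lambda)=\mu_{L}(\lambda)=0$ there is nothing to prove) and set $R_{A}:=(A-\zeta I)^{-1}$ and $R_{L}:=(L-\zeta I)^{-1}$, which are genuine bounded operators on $\H$ with $R_{L}-R_{A}=F$ and $\rank F=m<\infty$, independent of the choice of $\zeta$.

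The key step is the standard spectral correspondence between a relation and its resolvent: writing $\nu:=(\lambda-\zeta)^{-1}\neq0$, I claim that
\[
\ker(A-\lambda I)=\ker(R_{A}-\nu I),\qquad \ker(L-\lambda I)=\ker(R_{L}-\nu I).
\]
This is proven by direct manipulation in $\HH$: $\vE f{\lambda f}\in A$ forces $\vE{(\lambda-\zeta)f}{f}\in R_{A}$, hence $R_{A}f=\nu f$; conversely $R_{A}f=\nu f$ with $\nu\neq0$ unwinds to $\vE f{\lambda f}\in A$, and likewise for $L$. Consequently $G_{\lambda}=\ker(R_{A}-\nu I)\cap\ker(R_{L}-\nu I)$, and since on this common eigenspace $F=R_{L}-R_{A}$ vanishes, one obtains the convenient description $G_{\lambda}=\ker(R_{A}-\nu I)\cap\ker F=\ker(R_{L}-\nu I)\cap\ker F$.

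The two inequalities in \eqref{desespd01} then follow from a rank–nullity count. On $M_{A}:=\ker(R_{A}-\nu I)$ one has $R_{L}-\nu I=(R_{A}-\nu I)+F=F$, so $f\in M_{A}$ lies in $\ker(R_{L}-\nu I)$ exactly when $Ff=0$; thus $\ker F\cap M_{A}=G_{\lambda}$, the restriction of $F$ to $\oM{M_{A}}{G_{\lambda}}$ is injective, its image sits inside $\ran F$, and therefore $\dim[\oM{M_{A}}{G_{\lambda}}]\le\rank F$. Swapping the roles of $A$ and $L$ (replacing $F$ by $-F$, of equal rank) gives the companion bound for $\ker(L-\lambda I)$.

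Finally, \eqref{desespd02} is read off from the orthogonal splittings $\mu_{A}(\lambda)=\dim G_{\lambda}+\dim[\oM{M_{A}}{G_{\lambda}}]$ and $\mu_{L}(\lambda)=\dim G_{\lambda}+\dim[\oM{M_{L}}{G_{\lambda}}]$, where $M_{L}:=\ker(R_{L}-\nu I)$: since $\dim G_{\lambda}\le\mu_{A}(\lambda),\mu_{L}(\lambda)$ and each complement has dimension at most $\rank F$, one gets $\mu_{L}(\lambda)\ge\dim G_{\lambda}\ge\mu_{A}(\lambda)-\rank F$ and $\mu_{L}(\lambda)\le\dim G_{\lambda}+\rank F\le\mu_{A}(\lambda)+\rank F$ (the case $\dim G_{\lambda}=\infty$ being immediate). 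I expect the only delicate point to be verifying the eigenvalue correspondence cleanly for relations carrying a nontrivial multivalued part; once that is in place, everything reduces to the injectivity of $F$ on a finite-dimensional complement.
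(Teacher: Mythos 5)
Your proposal is correct and follows essentially the same route as the paper: both arguments rest on the observation that an eigenvector of $A$ at $\lambda$ annihilated by $F$ is automatically an eigenvector of $L$ at $\lambda$ (via the resolvent correspondence $\vE f{\lambda f}\in A \Leftrightarrow \vE{(\lambda-\zeta)f}{f}\in (A-\zeta I)^{-1}$), combined with the rank--nullity count $\dim(\oM{\H}{\ker F})=\rank F$. The paper phrases the bound \eqref{desespd01} as a contradiction while you phrase it as injectivity of $F$ on $\oM{\ker(A-\lambda I)}{G_\lambda}$, but this is only a cosmetic difference, and your dimension count for \eqref{desespd02} is identical to \eqref{desespd03}.
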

\begin{proof}
  If we assume, for example, $\dim[\oM{\ker(A-\lambda
    I)}{G_{\lambda}}]> \rank F$, then in view of
  \eqref{eq:T_of_finiterank} there exists a nonzero element $f\in
  \ker(A-\lambda I)\cap\ker F$ such that $f\perp G_{\lambda}$. Thus
  $\vE f{\lambda f}\in A$ and $\vE f{(\lambda-\zeta)^{-1}f}\in(A-\zeta
  I)^{-1}$. Since $f\in \ker F$, one has $\vE
  f{(\lambda-\zeta)^{-1}f}\in(L-\zeta I)^{-1}$, which implies
  $\vE f{\lambda f}\in L$. Hence $f\in G_{\lambda}$ yielding a contradiction.
  Now we prove the right inequality in \eqref{desespd02}. It follows
  from \eqref{desespd01} that
\begin{align}
\begin{split}\label{desespd03}
\mu_A(\lambda)&=\dim[\oM{\ker(A-\lambda I)}{G_{\lambda}}]+\dim
G_{\lambda}\\
&\leq \rank F+ \dim G_{\lambda}
\leq \rank F +\mu_L(\lambda)\,.
\end{split}
\end{align}
To obtain the left inequality in \eqref{desespd02}, interchange the
roles of $A$ and $L$ in \eqref{desespd03}.
\end{proof}

Proposition~\ref{prop:birman-for-relations} shows that the eigenspaces
of $A$ and $L$ can differ only by a subspace of dimension at most
$\rank F$. Besides, it follows from \eqref{desespd02} that
$\sigma_p^{\infty}(A)= \sigma_p^{\infty}(L)$.
\begin{proposition}
  If $A$ is a closed dissipative extension of a closed symmetric
  relation $S$ and $\lambda \in\hat \rho(S)$, then
  $\mu_A(\lambda)\leq \eta_{-}(S)$.
\end{proposition}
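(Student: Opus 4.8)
The plan is to embed the eigenspace $\ker(A-\lambda I)$ into a deficiency subspace of $S^{*}$ and then invoke the constancy of the deficiency index on connected components of $\hat\rho(S)$. I would begin by disposing of the case $\lambda\in\C_{-}$: since $A$ is closed dissipative, \eqref{eq:setdindex} gives $\C_{-}\subset\hat\rho(A)$, so for such $\lambda$ the inverse $(A-\lambda I)^{-1}$ is a bounded operator, hence single valued, which forces $\ker(A-\lambda I)=\{0\}$ and $\mu_A(\lambda)=0$. Thus it suffices to treat $\lambda\in\C_{+}\cup\R$.

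The core step, which I expect to be the main obstacle, is to prove that any dissipative extension satisfies $A\subset S^{*}$. Fix $\vE fg\in A$ and $\vE hk\in S$. Since $S\subset A$ and $A$ is a linear subspace of $\HH$, the element $\vE{f+th}{g+tk}$ lies in $A$ for every $t\in\C$, so dissipativity gives $\im\ip{f+th}{g+tk}\geq0$. Expanding this inner product, using $\im\ip hk=0$ (symmetry of $S$), writing $t=se^{i\theta}$ and letting $\abs t\to\infty$ along all arguments $\theta$ forces the term linear in $s$ to vanish identically; reading off its dependence on $\theta$ yields $\ip fk=\ip gh$ for every $\vE hk\in S$, which is precisely the defining condition for $\vE fg\in S^{*}$. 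Applying this to $\vE f{\lambda f}\in A$ for $f\in\ker(A-\lambda I)$ shows $\vE f{\lambda f}\in S^{*}$, that is, $\ker(A-\lambda I)\subset\ker(S^{*}-\lambda I)$.

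Finally I would convert this inclusion into the dimension bound. By the identity $\ker T^{*}=(\ran T)^{\perp}$ from \eqref{poHs} applied to $T=S-\overline\lambda I$, together with \eqref{ranTp},
\[
\mu_A(\lambda)=\dim\ker(A-\lambda I)\leq\dim\ker(S^{*}-\lambda I)=\dim[\ran(S-\overline\lambda I)]^{\perp}=\eta_{\overline\lambda}(S).
\]
It remains to identify $\eta_{\overline\lambda}(S)$ with $\eta_{-}(S)$. For $\lambda\in\C_{+}\cup\R$ one has $\overline\lambda\in\C_{-}\cup\R$, and since $S$ is symmetric $\overline\lambda\in\hat\rho(S)$ (for real $\overline\lambda$ this is exactly the hypothesis $\lambda\in\hat\rho(S)$). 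The connected component of $\hat\rho(S)$ containing $\overline\lambda$ meets $\C_{-}$: trivially when $\overline\lambda\in\C_{-}$, and because $\hat\rho(S)$ is open (so a ball about a real $\overline\lambda$ reaches into $\C_{-}$) otherwise. As $\eta_{\zeta}(S)$ is constant on each component of $\hat\rho(S)$ and equals $\eta_{-}(S)$ on $\C_{-}$, we conclude $\eta_{\overline\lambda}(S)=\eta_{-}(S)$, hence $\mu_A(\lambda)\leq\eta_{-}(S)$. The only genuinely non-formal step is the inclusion $A\subset S^{*}$ obtained from the dissipativity inequality; the dimension count and the continuity of $\eta$ are then routine bookkeeping with deficiency indices.
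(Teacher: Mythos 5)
Your proof is correct and follows essentially the same route as the paper's: rule out $\lambda\in\C_{-}$ via $\C_{-}\subset\hat\rho(A)$, embed $\ker(A-\lambda I)$ into $\ker(S^{*}-\lambda I)$ using $A\subset S^{*}$, and identify the resulting dimension with $\eta_{-}(S)$ by constancy of the deficiency index on connected components. The only difference is that you supply an explicit (and correct) proof of the inclusion $A\subset S^{*}$ via the dissipativity inequality applied to $f+th$, whereas the paper simply asserts this standard fact; this makes your argument somewhat more self-contained but does not change the approach.
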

\begin{proof}
  It is clear from \eqref{eq:kerspec} and \eqref{eq:setdindex} that
  $\ker (A-\zeta I)$ can only have nontrivial elements when
  $\zeta\in\C_{+}\cup \R$. Thus, since $A\subset S^{*}$, one obtains
  that \beao \mu_A(\lambda)= \dim\Nk\lambda A\leq\dim\Nk\lambda
  {S^{*}}=\eta_{-}(S)\eeao for any $\lambda \in\hat
  \rho(S)\cap(\C_{+}\cup \R)$.
\end{proof}

\section{\textbf{Compact and finite-dimensional perturbation of
  selfadjoint relations}}
\label{sec:Perturbations}
We begin this section by stating the following
characterization of selfadjoint relations which in its operator
version is well known. Another characterization can be found in
\cite[Thm.\,2.5]{MR2993376}.

\begin{proposition}\label{equaa}
For $A$ a closed symmetric relation the following are equivalent: 
 \begin{enumerate}[{(i)}]
 \item\label{A0} $A$ is selfadjoint.
  \item\label{A00} $\eta_\pm(A)=0$.
  \item\label{A01} $\hat\rho(A)=\rho(A).$
  \item\label{A02} $\sigma (A)\subset \R.$
 \end{enumerate}
\end{proposition}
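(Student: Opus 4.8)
The plan is to prove the cyclic chain of implications $(\ref{A0})\Rightarrow(\ref{A00})\Rightarrow(\ref{A01})\Rightarrow(\ref{A02})\Rightarrow(\ref{A0})$, exploiting the machinery on deficiency indices and the quasi-regular set already assembled in Section~\ref{sec:linear-relations}. Throughout, recall that $A$ is assumed closed and symmetric, so $A\subset A^{*}$ and, by the discussion following \eqref{eq:setdindex}, $\C\backslash\R\subset\hat\rho(A)$; in particular both $\eta_{+}(A)$ and $\eta_{-}(A)$ are well defined and constant on $\C_{+}$ and $\C_{-}$ respectively.

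For $(\ref{A0})\Rightarrow(\ref{A00})$: if $A=A^{*}$, then for any $\zeta\in\C_{-}$ the deficiency space $\Nk{\overline\zeta}{A^{*}}=\Nk{\overline\zeta}{A}$ consists of pairs $\vE f{\overline\zeta f}\in A$; since $A$ is symmetric, the standard computation $\im\ip f{\overline\zeta f}=0$ forces $\overline\zeta\in\R$, contradicting $\overline\zeta\in\C_{+}$ unless $f=0$, so $\eta_{-}(A)=\dim\Nk{\overline\zeta}{A^{*}}=0$, and symmetrically $\eta_{+}(A)=0$. For $(\ref{A00})\Rightarrow(\ref{A01})$: the inclusion $\rho(A)\subset\hat\rho(A)$ is automatic. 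Conversely, take $\zeta\in\hat\rho(A)$; by \eqref{ranTp} and the definition of the deficiency index, $\eta_{\zeta}(A)=\dim[\ran(A-\zeta I)]^{\perp}$ is constant on each connected component of $\hat\rho(A)$. Because $\eta_{\pm}(A)=0$ and $A$ is closed, the value $0$ propagates across the components meeting $\C_{+}$, $\C_{-}$, and — crucially — across any real points of $\hat\rho(A)$ that lie in a component touching the upper or lower half-plane; when $\eta_{\zeta}(A)=0$ one has $(A-\zeta I)^{-1}\in\cA B(\H)$ as noted right after \eqref{defspace}, i.e. $\zeta\in\rho(A)$, so $\hat\rho(A)\subset\rho(A)$ and thus $\hat\rho(A)=\rho(A)$. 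The implication $(\ref{A01})\Rightarrow(\ref{A02})$ is then immediate: from \eqref{eq:kerspec} and $\C\backslash\R\subset\hat\rho(A)=\rho(A)$ we get $\sigma(A)=\hat\sigma(A)\subset\R$.

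The last implication $(\ref{A02})\Rightarrow(\ref{A0})$ is where I expect the real work. Here I would use $\sigma(A)\subset\R$ to pick any $\zeta\in\C_{+}\cup\C_{-}\subset\rho(A)$, so that $(A-\zeta I)^{-1}\in\cA B(\H)$ and hence $\ran(A-\zeta I)=\H$, forcing $\eta_{\zeta}(A)=0$ for such $\zeta$; this yields $\eta_{+}(A)=\eta_{-}(A)=0$. The point is to convert the vanishing of both deficiency indices into the operator equality $A=A^{*}$. Since $A\subset A^{*}$ is given by symmetry, it suffices to show $A^{*}\subset A$. Using \eqref{poHs} I would relate $\ker(A^{*}-\overline\zeta I)=\Nk{\overline\zeta}{A^{*}}$ to $\eta_{\zeta}(A)$ and conclude that the deficiency spaces of $A^{*}$ in both half-planes are trivial; then a von~Neumann–type decomposition of $A^{*}$ as $A$ augmented by these (now trivial) deficiency spaces collapses the extension, giving $A^{*}=A$.

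The main obstacle, and the step meriting care, is the real-axis behaviour in $(\ref{A00})\Rightarrow(\ref{A01})$ together with the rigorous decomposition argument in $(\ref{A02})\Rightarrow(\ref{A0})$. For a genuine relation (not merely an operator) one must ensure that the constancy of $\eta_{\zeta}(A)$ on connected components of $\hat\rho(A)$ is applied only within components that actually meet $\C_{+}$ or $\C_{-}$, and that the passage from $\eta_{\zeta}(A)=0$ to $\zeta\in\rho(A)$ uses closedness of $A$ (so that $\ran(A-\zeta I)$ is closed, via the property recorded after the definition of $\hat\rho$). The potential multivalued part of $A$ is harmless because symmetry already gives $\dom A\subset(\mul A)^{\perp}$ as in \eqref{posr}, so no extra hypothesis on $\mul A$ is needed; I would simply invoke \eqref{poHs} and \eqref{arens} to keep track of $\mul A$ and $\ker A^{*}$ rather than treating the operator and multivalued parts separately.
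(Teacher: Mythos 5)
Your overall route is the same cyclic chain (i)$\Rightarrow$(ii)$\Rightarrow$(iii)$\Rightarrow$(iv)$\Rightarrow$(i) that the paper uses, and two of the implications match the paper's argument essentially verbatim: (ii)$\Rightarrow$(iii) via constancy of $\eta_{\zeta}(A)$ on the connected components of the open set $\hat\rho(A)$ (your worry about which components are reached is automatically resolved, since $\C\setminus\R\subset\hat\rho(A)$ and openness force every component to meet $\C_{+}\cup\C_{-}$), and (iii)$\Rightarrow$(iv) immediately. Your (i)$\Rightarrow$(ii) replaces the paper's observation that $(A-\zeta I)^{-1}$ is an operator, so that $\ker(A-\zeta I)=\mul (A-\zeta I)^{-1}=\{0\}$, with the direct computation $\im\ip{f}{\overline\zeta f}=\no{f}^{2}\im\overline\zeta=0$; both are fine.

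The one place where you have not actually given a proof is (iv)$\Rightarrow$(i). Deducing $\eta_{\pm}(A)=0$ from $\sigma(A)\subset\R$ is correct, but the passage from trivial deficiency spaces to $A^{*}\subset A$ is precisely the content that needs an argument, and you defer it to ``a von Neumann--type decomposition of $A^{*}$'' that is nowhere established in the paper or in its Section 2 toolkit. Such a decomposition is true for closed symmetric relations, so your plan is not wrong, but as written it is an appeal to an external theorem rather than a proof. The paper closes this step with a short self-contained computation that you should reproduce: given $\vE fg\in A^{*}$, since $i\in\rho(A)$ there is $h$ with $\vE{g-if}{h}\in (A-iI)^{-1}$, hence $\vE{h}{g-i(f-h)}\in A\subset A^{*}$; subtracting this from $\vE fg$ gives $\vE{f-h}{i(f-h)}\in\Nk{i}{A^{*}}$, and this deficiency space is trivial because $-i\in\rho(A)$ together with $\ker(A^{*}-iI)=[\ran (A+iI)]^{\perp}$ from \eqref{poHs} gives $\dom\Nk{i}{A^{*}}=\{0\}$; therefore $f=h$ and $\vE fg\in A$. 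This computation is exactly the fragment of the von Neumann formula you need, so supplying it makes your proof complete.
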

\begin{proof}\eqref{A0} $\Rightarrow$ \eqref{A00}: If
  $\zeta\in\C\backslash\R$, then $(A-\zeta I)^{-1}$ is an
  operator. Thus
\begin{align*}
 \{0\}=\mul (A-\zeta I)^{-1}=\ker (A -\zeta I)=\dom \Nk{\zeta}{A},
 \end{align*}
 whence $\dim \Nk\zeta{A^{*}}=0$.  \eqref{A00} $\Rightarrow$
 \eqref{A01}: If $\zeta\in\hat\rho(A)\setminus\R$, then, taking into
 account \eqref{ranTp}, one concludes that $\ran (A-\zeta I)=\H$ and
 then $\zeta\in\rho(A)$. Since $\hat\rho(A)$ is open and $\eta_\pm(A)$
 are constants in the connected components of $\hat\rho(A)$, if
 $\zeta\in\hat\rho(A)\cap\R $, then $\zeta\in\rho(A)$. Thus we have
 shown that $\hat\rho(A)=\rho(A)$. \eqref{A01} $\Rightarrow$
 \eqref{A02}: This is straightforward.  \eqref{A02} $\Rightarrow$
 \eqref{A0}: The hypothesis immediately implies that $\dom\Nk
 i{A^{*}}=\{0\}$. If $\vE fg$ is in $A^{*}$, then $\vE
 {g-if}{h}\in (A-iI)^{-1}$. Therefore \bea\label{oeoAsa} \vE
 {h}{g-i(f-h)}\in A\subset A^{*}.  \eea By linearity, $\vE
 {f-h}{i(f-h)}\in \Nk i{A^{*}}$. Thus, $f=h$ and \eqref{oeoAsa}
 implies that $\vE fg$ is in $A$.
  \end{proof}
\begin{definition}
 \label{def:birman-definition}
 The notion of essential spectrum for relations in the case of
 selfadjoint relations reduces to (see \cite[Sec.\,9.1.1]{MR1192782}
 for the case of operators and \cite{MR2993376} for relations) \beao
 \sigma_{e}(A):=\sigma_c(A)\cup\sigma_p^{\infty}(A)\,.  \eeao
\end{definition}
As a consequence of Proposition~\ref{equaa}-\eqref{A02} and Remark
  \ref{posr00}, if $A$ is a selfadjoint relation, then $A_{A}$ is a
  selfadjoint operator. Moreover, one verifies at once, on the basis
  of \eqref{eq:depolr10}, that $\ker (A-\zeta I)=\ker (A_{A}-\zeta
  I)$, which in turn implies that
  $\sigma_p^{\infty}(A)=\sigma_p^{\infty}(A_{A})$. Thus,
  \bea\label{eq:essA}\sigma_{e}(A)=\sigma_{e}(A_{A})\,.\eea
  
  The following definition is a generalization of the notion of singular
  sequences for selfadjoint operators.
\begin{definition}
 \label{def:singular}
  A sequence $\{u_{n}\}_{n\in\N}$ in the domain of a selfadjoint
  relation $A$ is said to be singular for $A$ at $\lambda\in\R$, if
  there exists a sequence $\llb\vE{u_{n}}{v_{n}}\rrb_{n\in\N}$ with
  elements in $A$ such that
  the following conditions are satisfied:
\begin{enumerate}[(i)]
\begin{multicols}{3}
\item $\displaystyle \inf_{n\in\N}\no{u_{n}}>0$,
\item $u_n\rightharpoonup 0$,
\item\label{con3:singular-sequence} $(v_{n}-\lambda u_{n})\rightarrow 0$\,,
\end{multicols}
\end{enumerate}
where $\rightharpoonup$ denotes weak convergence.
\end{definition}
\begin{remark}\label{re:singular-for-operator-part}
If in Definition~\ref{def:singular}, one writes  $v_n=t_n+s_n$, where
$t_n\in\ran \pE A$ and $s_n\in\mul A$, then
\begin{align*}
\no{t_n-\lambda u_n}^2&\leq \no{t_n-\lambda u_n}^2+\no{s_n}^2\\&=\no{(t_n+s_n)-\lambda u_n}^2\rightarrow0\,.
 \end{align*}
Hence,  $s_n\rightarrow0$ and the sequence $\llb\vE{u_{n}}{t_{n}}\rrb_{n\in\N}$, which has
elements in $\pE A\subset A$, satisfies
\eqref{con3:singular-sequence}. This means that if $A$ is not an
operator and $\{u_{n}\}_{n\in\nats}$ is singular, then
the sequence $\llb\vE{u_{n}}{v_{n}}\rrb_{n\in\N}$ in
Definition~\ref{def:singular} is not unique and, moreover, there are
sequences $\llb\vE{u_{n}}{w_{n}}\rrb_{n\in\N}$ in $A$ not satisfying
\eqref{con3:singular-sequence}.
\end{remark}

\begin{remark}\label{eq:sigular-equal-opertors}
  Note that Remark
  \ref{re:singular-for-operator-part} and \eqref{eq:depolr10} imply
  that there is a singular sequence for $A$ at $\lambda$ if and only
  if there is a singular sequence for $A_{A}$ at $\lambda$.
\end{remark}

The following result is known as the Weyl criterion and it can be
found for selfadjoint operators in \cite[Th.\,9.1.2]{MR1192782} (see
a more general version for operators in \cite[Chap.\,4 Sec.\,5]{MR0407617}).
\begin{proposition}\label{criweylrel}
  Let $A$ be a selfadjoint relation. The real number $\lambda$ belongs to
  $\sigma_{e}(A)$ if and only if there exists a singular sequence for
  $A$ at $\lambda$.
 \end{proposition}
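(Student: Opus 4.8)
The plan is to reduce the statement to the operator case. By Remark~\ref{eq:sigular-equal-opertors}, there is a singular sequence for $A$ at $\lambda$ if and only if there is one for the operator part $A_A$; and by \eqref{eq:essA} we have $\sigma_e(A)=\sigma_e(A_A)$, where $A_A$ is a selfadjoint operator on $\mm{\mul A}$ (by Proposition~\ref{equaa}-\eqref{A02} together with Remark~\ref{posr00}). Thus the proposition for $A$ follows immediately from the classical Weyl criterion for the selfadjoint operator $A_A$, which is exactly \cite[Th.\,9.1.2]{MR1192782}. So the main body of the proof is the two reductions, both of which are already in hand from the preceding remarks.

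If one instead wishes to argue directly at the level of the relation, I would proceed as follows. For the ``if'' direction, suppose $\{u_n\}$ is singular at $\lambda$ with associated $\vE{u_n}{v_n}\in A$. I would first rule out $\lambda\in\rho(A)$: since $\lambda\in\R\subset\sigma(A)^c$ would give $(A-\lambda I)^{-1}\in\cA B(\H)$, boundedness would force $\no{u_n}=\no{(A-\lambda I)^{-1}(v_n-\lambda u_n)}\to 0$, contradicting condition (i). Hence $\lambda\in\sigma(A)=\hat\sigma(A)$ (using Proposition~\ref{equaa}-\eqref{A01}), so by \eqref{eq:kerspec} either $\lambda\in\sigma_p(A)$ or $\lambda\in\sigma_c(A)$. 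The delicate point is to upgrade membership in $\sigma_p$ to membership in $\sigma_p^\infty$ when the range is closed: one shows that if $\ran(A-\lambda I)$ were closed and $\dim\ker(A-\lambda I)<\infty$, then $(A-\lambda I)$ would be bounded below on the orthogonal complement of its kernel, and projecting $u_n$ off the finite-dimensional kernel would again contradict $\inf_n\no{u_n}>0$ together with $u_n\rightharpoonup 0$. In either surviving case $\lambda\in\sigma_c(A)\cup\sigma_p^\infty(A)=\sigma_e(A)$.

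For the ``only if'' direction, I would construct the singular sequence from the spectral content of $\sigma_e(A)$. If $\lambda\in\sigma_p^\infty(A)$ pick an orthonormal sequence $\{u_n\}$ in the infinite-dimensional $\ker(A-\lambda I)$, set $v_n=\lambda u_n$; then $u_n\rightharpoonup 0$ (orthonormality), $\no{u_n}=1$, and $v_n-\lambda u_n=0$, so all three conditions hold. If $\lambda\in\sigma_c(A)$, then $\ran(A-\lambda I)$ is not closed, so $(A-\lambda I)$ is not bounded below, and I would extract unit vectors $u_n\in\dom A$ with $\vE{u_n}{v_n}\in A$ and $\no{v_n-\lambda u_n}\to 0$; passing to a subsequence one may assume weak convergence $u_n\rightharpoonup u$, and an argument using that $A$ is closed shows $\vE u{\lambda u}\in A$, so $u\in\ker(A-\lambda I)$—replacing $u_n$ by $u_n-u$ (which stays bounded below away from zero after normalization, since $\lambda\notin\sigma_p^\infty$ forces $u$ to live in a finite-dimensional eigenspace) yields the required weakly null singular sequence. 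The main obstacle in this direct approach is precisely this bookkeeping in the $\sigma_c$ case: ensuring, after subtracting off any weak limit, that the lower bound $\inf_n\no{u_n}>0$ survives. This is exactly the subtlety that the reduction to $A_A$ via Remark~\ref{eq:sigular-equal-opertors} lets us outsource to the established operator result, which is why I would favor the first, reduction-based proof.
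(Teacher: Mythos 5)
Your main argument—reducing to the operator part $A_A$ via Remark~\ref{eq:sigular-equal-opertors} and \eqref{eq:essA} and then invoking the classical Weyl criterion for selfadjoint operators—is exactly the paper's proof, and it is correct. The additional direct sketch you append is unnecessary for the result, and you rightly prefer the reduction.
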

\begin{proof}
The assertion follows from \eqref{eq:essA} and
Remark~\ref{eq:sigular-equal-opertors} since the
assertion is true for operators.
\end{proof}

Denote by $S_\infty(\H)\subset\cA B(\H)$ the set of compact operators
whose domain is $\H$. It is known that $V$ belongs to $S_\infty(\H)$
if and only if $V$ maps a weakly convergent sequence into a convergent
sequence (see \cite[Sec.\,2.6]{MR1192782}). The following assertion
is a Weyl-type perturbation theorem for relations.

\begin{proposition}\label{rem:TWaylaa}
  If $A$ and $V$ are selfadjoint relations such that $V\in
  S_\infty(\H)$, then $L=A+V$ is selfadjoint and
  $\sigma_e(L)=\sigma_e(A)$.
\end{proposition}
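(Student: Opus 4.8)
The plan is to reduce everything to the operator part and then invoke the Weyl criterion (Proposition~\ref{criweylrel}) together with the compactness characterization of $S_\infty(\H)$.

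First I would establish that $L=A+V$ is selfadjoint. Since $V\in S_\infty(\H)\subset\cA B(\H)$ is a bounded selfadjoint operator with $\dom V=\H$, and $A$ is selfadjoint (hence maximal dissipative both in the upper and lower half-plane), the additive-perturbation result \cite[Thm.\,3.8]{riossilva-expos} quoted just before the first lemma of Section~\ref{sec:finitePerturbations} gives that $A+V$ is maximal dissipative; applying the same reasoning to $A+(-V)$, or equivalently checking $\eta_\pm(L)=0$ via Proposition~\ref{equaa}, yields that $L$ is in fact selfadjoint. I would spell out that $\dom L=\dom A$ (because $\dom V=\H$) and that symmetry of $L$ follows from symmetry of $A$ and $V$, so that $L\subset L^*$; maximality on both half-planes then forces $L=L^*$.

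Next comes the spectral equality $\sigma_e(L)=\sigma_e(A)$, which is the heart of the matter. By Proposition~\ref{criweylrel} it suffices to show that $A$ and $L$ have the same singular sequences at every $\lambda\in\R$. Given a singular sequence $\{u_n\}$ for $A$ at $\lambda$, with $\llb\vE{u_n}{v_n}\rrb\subset A$ satisfying (i)--(iii) of Definition~\ref{def:singular}, I would produce from it the sequence $\vE{u_n}{v_n+Vu_n}$, whose elements lie in $L=A+V$ by the definition of the sum of relations. Conditions (i) and (ii) are unchanged since they involve only $\{u_n\}$. For condition (iii) I would write
\begin{align*}
(v_n+Vu_n)-\lambda u_n=(v_n-\lambda u_n)+Vu_n\,,
\end{align*}
and here the compactness of $V$ is decisive: since $u_n\rightharpoonup0$ and $V\in S_\infty(\H)$, the stated characterization of compact operators gives $Vu_n\rightarrow0$, while $(v_n-\lambda u_n)\rightarrow0$ by hypothesis; hence the whole expression tends to $0$ and $\{u_n\}$ is singular for $L$ at $\lambda$. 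The reverse implication is symmetric: $L$ is selfadjoint, $-V$ is again compact and selfadjoint, and $A=L+(-V)$, so the same argument run with $L$ and $-V$ in place of $A$ and $V$ converts a singular sequence for $L$ into one for $A$. By Proposition~\ref{criweylrel} this establishes $\sigma_e(L)=\sigma_e(A)$.

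The step I expect to be the main obstacle is purely the bookkeeping for the multivalued parts. The Weyl criterion as stated already routes through the operator parts via \eqref{eq:essA} and Remark~\ref{eq:sigular-equal-opertors}, so strictly speaking I only need singular sequences to behave well, and the calculation above never inspects $\mul A$ or $\mul L$ directly. The delicate point, flagged in Remark~\ref{re:singular-for-operator-part}, is that the witnessing sequence $\vE{u_n}{v_n}$ is not unique when $A$ is not an operator; but this causes no trouble, because I only need the \emph{existence} of one admissible witness for $L$, and the construction $v_n+Vu_n$ provides it explicitly. Thus the argument goes through without imposing any hypothesis on the multivalued parts, which is exactly the advantage of working with Definition~\ref{def:birman-definition} and the reduction to $A_A$.
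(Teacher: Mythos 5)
Your proof is correct and takes essentially the same route as the paper: the key step is identical, namely transporting a singular sequence $\{u_n\}$ for $A$ at $\lambda$ to the witness $\vE{u_n}{v_n+Vu_n}$ in $L$, using compactness of $V$ and $u_n\rightharpoonup 0$ to get $Vu_n\to 0$, and obtaining the reverse inclusion from $A=L+(-V)$. The only (immaterial) divergence is in the selfadjointness of $L$: the paper cites $(A+V)^*=A^*+V^*$ from \cite{riossilva-expos}, while you argue via maximal dissipativity and $\eta_\pm(L)=0$ with Proposition~\ref{equaa} (note it is $(-A)+(-V)$, not $A+(-V)$, that handles the upper half-plane); both are valid.
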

\begin{proof}
  The selfadjointness of $L$ follows from the fact that
  $(A+V)^{*}=A^{*}+V^{*}$ \cite[Prop.\, 2.2]{riossilva-expos}. For any 
  $\llb\vE{u_{n}}{v_{n}}\rrb_{n\in\N}\subset A$, there is  $\llb\vE{u_{n}}{w_{n}}\rrb_{n\in\N}\subset V$ and 
   \beao\llb\vE{u_n}{v_n+w_n}\rrb_{n\in\N}\subset L\,.\eeao
Then, if $\{u_n\}_{n\in\N}$ is singular
  for $A$ at $\lambda$, then $w_{n}\rightarrow 0$ and $[(v_{n}+w_{n})-\lambda
  u_{n}]\rightarrow 0$, which implies that  $\{u_n\}_{n\in\N}$ is singular for
  $L$ at $\lambda$ and, by Proposition \ref{criweylrel}, one has $\sigma_e(A)\subset\sigma_e(L)$. The other inclusion is
  obtained by noting that $A=L-V$.
 \end{proof}
 We now extend the previous result using the operator $F$ given in
 \eqref{Tdifreso}. An alternative proof of the following theorem is
 found in \cite[Thm.\,5.1]{MR3398739}.
\begin{theorem}\label{theorem-equal-essential-spectra}
  If $A$ and $L$ are selfadjoint relations and if $F$ belongs to
  $S_\infty(\H)$, then $\sigma_e(A)=\sigma_e(L)$.
\end{theorem}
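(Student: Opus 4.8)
The plan is to establish both inclusions by means of the Weyl criterion (Proposition~\ref{criweylrel}), transporting a singular sequence for one relation into a singular sequence for the other through the resolvents. Since $A$ and $L$ are selfadjoint, Proposition~\ref{equaa}-\eqref{A02} gives $\sigma(A),\sigma(L)\subset\R$, so it suffices to argue for real $\lambda$. I would fix $\zeta\in\rho(A)\cap\rho(L)$ and abbreviate $R_A:=(A-\zeta I)^{-1}$ and $R_L:=(L-\zeta I)^{-1}$, both in $\cA B(\H)$, so that $F=R_L-R_A\in S_\infty(\H)$ is as in \eqref{Tdifreso}. As the two relations enter the hypothesis symmetrically and $-F$ is again compact, it is enough to prove $\sigma_e(A)\subset\sigma_e(L)$; the reverse inclusion then follows by interchanging $A$ and $L$.

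Take $\lambda\in\sigma_e(A)$, necessarily real. By Proposition~\ref{criweylrel} there is a singular sequence $\{u_n\}_{n\in\N}$ for $A$ at $\lambda$, with $\llb\vE{u_n}{v_n}\rrb_{n\in\N}\subset A$ satisfying (i)--(iii) of Definition~\ref{def:singular}. Setting $g_n:=v_n-\zeta u_n$, one has $R_A g_n=u_n$, i.e.\ $\vE{g_n}{u_n}\in R_A$. The natural candidate for a singular sequence for $L$ is then $u_n':=R_L g_n$: indeed $\vE{g_n}{u_n'}\in R_L$ forces $\vE{u_n'}{v_n'}\in L$ with $v_n':=g_n+\zeta u_n'$, so these pairs lie in $L$ by construction.

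The decisive step is to invoke compactness of $F$. From condition (iii) one has $v_n=\lambda u_n+o(1)$, hence $g_n=(\lambda-\zeta)u_n+o(1)$; combined with $u_n\rightharpoonup0$ this yields $g_n\rightharpoonup0$. Since an operator in $S_\infty(\H)$ carries weakly convergent sequences to norm-convergent ones, $Fg_n\to0$, and therefore $u_n'=R_Lg_n=R_Ag_n+Fg_n=u_n+Fg_n$, so that $u_n'-u_n\to0$. I expect the three conditions of Definition~\ref{def:singular} for $\{u_n'\}$ to follow at once: $u_n'\rightharpoonup0$ because $u_n\rightharpoonup0$ and $Fg_n\to0$; the bound $\inf_n\no{u_n'}>0$ holds on a tail, since $\no{u_n'}\geq\no{u_n}-\no{Fg_n}$ while $\inf_n\no{u_n}>0$; and $v_n'-\lambda u_n'=g_n+(\zeta-\lambda)u_n'=(\lambda-\zeta)u_n+(\zeta-\lambda)u_n+o(1)\to0$. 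Hence $\{u_n'\}$ is singular for $L$ at $\lambda$, so $\lambda\in\sigma_e(L)$.

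The main obstacle is conceptual rather than computational: one must notice that, although the resolvents $R_A,R_L$ are not selfadjoint because $\zeta\notin\R$, the compactness of their difference is exactly what turns the weak null convergence of the transported data $g_n$ into the norm estimate $Fg_n\to0$, which in turn makes $u_n'$ differ from $u_n$ only by a null sequence. Once this is in place the remaining verifications are routine, and the tail adjustment needed for condition (i) is harmless, since the Weyl criterion only requires the existence of some singular sequence.
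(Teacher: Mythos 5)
Your proof is correct and follows essentially the same route as the paper: both arguments use the Weyl criterion to transport a singular sequence for $A$ at $\lambda$ through the resolvents into a singular sequence for $L$, with compactness of $F$ turning the weak null convergence of the transported data into the norm estimate that makes the new sequence differ from the old by a null sequence. The only difference is cosmetic: you work directly with $R_A,R_L$ as everywhere-defined operators and observe $g_n\rightharpoonup 0$ outright, whereas the paper first passes to the operator part $A_A$ via Remark~\ref{eq:sigular-equal-opertors} and decomposes $F(A_{A}-\zeta I)=F(A_{A}-\lambda I)+(\lambda-\zeta)F$ to reach the same conclusion.
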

\begin{proof}
  We only need to show that $\sigma_e(A)\subset\sigma_e(L)$.  If
  $\{u_n\}_{n\in\N}$ is singular for $A$ at $\lambda$, then by Remark
  \ref{eq:sigular-equal-opertors} there is a sequence
  $\llb\vE{u_n}{t_n}\rrb_{n\in\N}$ in $A_A$ in which
  \begin{equation}
    \label{eq:convofAa}
  (t_{n}-\lambda
  u_{n})\rightarrow 0\,.
  \end{equation}
  Note that $\vE{t_n-\zeta u_n}{u_n}\in (A-\zeta I)^{-1}$ and there
  exists $\vE {t_n-\zeta u_n}{w_n}\in (L-\zeta I)^{-1}$ for any
  $n\in\N$. A short computation shows that
  \bea\label{eq:aux-sequence-L} \llb \vE{w_n}{t_n-\zeta
    ({u_n-w_n})}\rrb_{n\in\N}\subset L\,.  \eea In view of Proposition
  \ref{criweylrel}, it only remains to prove that
  \eqref{eq:aux-sequence-L} makes $\{w_n\}_{n\in\N}$ be singular for
  $L$ at $\lambda$. First, one verifies that
  $\vE{t_n-\zeta u_n}{w_n-u_{n}}\in F$ and
  $\vE{u_n}{w_n-u_n}\in F(A_{A}-\zeta I)$.  Since $A_{A}$ and $F$ are
  operators, \beao F(A_{A}-\zeta I)=F(A_{A}-\lambda
  I)+(\lambda-\zeta)F\,.\eeao Then there exist
  $\vE {u_n}{s_n}\in F(A_{A}-\lambda I)$ and
\begin{equation}
  \label{eq:contention-in-F}
 \vE
  {u_n}{g_n}\in (\lambda-\zeta)F
\end{equation}
such that \bea\label{eq:auxeq00}
  {w_n-u_n}={s_n+g_n}\,,\qquad n\in\N\,.\eea The fact that
  $\vE{u_n}{t_n-\lambda u_n}$ is in
  $A_{A}-\lambda I$ implies
\begin{equation}
  \label{eq:contention-in-F-2}
\vE {t_n-\lambda u_n}{s_n}\in F\,.
\end{equation}
Since $F$ is a compact operator and $u_n\rightharpoonup 0$, it follows
from \eqref{eq:convofAa}, \eqref{eq:contention-in-F}, and
\eqref{eq:contention-in-F-2} that $g_n, \, s_{n}\rightarrow0$. Thus
\eqref{eq:auxeq00} implies \bea\label{eq:auxeq01}
(w_{n}-u_{n})\rightarrow 0\,.\eea The fact that $\{u_n\}_{n\in\N}$ is
singular then yields that
$w_n\rightharpoonup 0$ and $\inf_{n\in\N}\no{w_n}>0$. To conclude the
proof, observe that from
\eqref{eq:convofAa} and \eqref{eq:auxeq01}, one has \beao [t_n-\zeta
({u_n-w_n})]-\lambda w_n=(t_n-\lambda
u_n)-(\lambda-\zeta)(w_n-u_n)\rightarrow0\,.\eeao
\end{proof}

Let us turn to the study of the discrete spectrum of a selfadjoint
relation $A$. In view of Remark~\ref{posr00}, one can consider the
spectral theorem for selfadjoint operators. Let $E_{A _A}$ be the
spectral measure of $A_{A}$. It follows from \cite[Th.\,
6.1.3]{MR1192782} that
\begin{enumerate}[(1)]
 \item \label{spectr-selfadjoint-rel}$\sigma(A)=\supp E_{A_{A}}$.
 \item
   $\sigma_p(A)=\{\lambda\in \R :
   E_{A_{A}}\{\lambda\}\neq0\}$. The eigenspace corresponding to the
   eigenvalue $\lambda$ is
   $E_{ A_{A}}\{\lambda\}(\mul A)^{\perp}$.
 \item \label{cont-spectr-selfadjoint-rel}$\sigma_c(A)$ is the set of non-isolated points
   of $\sigma(A)$.
\end{enumerate}
Consider a bounded interval $\Delta$ and define
\beao\mu_A(\Delta):=\dim E_{A_{A}}(\Delta)(\mul A)^{\perp}\,.\eeao

The following assertion does not follow directly from
\cite[Thm.\,9.3.3]{MR1192782} and
\eqref{spectr-selfadjoint-rel})--\eqref{cont-spectr-selfadjoint-rel},
since the relations $L$ and $A$ could be such that $(\mul
A)^{\perp}$ and $(\mul L)^{\perp}$ do not coincide. This is
illustrated by the relations we give as examples in \eqref{selfadjoint-of-S}.

\begin{theorem}\label{eqespecd}
If $A$ and $L$ are two selfadjoint relations and $\rank F$ is finite, then
 \begin{align}\label{desespd}
  \mu_A(\Delta)-\rank F\leq \mu_L(\Delta)\leq \mu_A(\Delta)+\rank F.
 \end{align}
\end{theorem}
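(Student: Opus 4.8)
The plan is to transfer the problem to a pair of bounded self-adjoint operators acting on the \emph{same} space $\H$, namely the resolvents of $A$ and $L$ at a real regular point, and then to invoke the classical finite-rank counting estimate \cite[Thm.\,9.3.3]{MR1192782}. The obstruction flagged just before the statement---that $(\mul A)^{\perp}$ and $(\mul L)^{\perp}$ may differ---is dissolved by this device: under the change of variable $\phi(\lambda)=(\lambda-\zeta_0 I)^{-1}$ the multivalued parts of $A$ and $L$ are both pushed onto the single eigenvalue $0$ of the resolvents, while the bounded interval $\Delta$ is carried to a spectral set of the resolvents that stays away from $0$.

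First I would dispose of the degenerate situation $\sigma(A)\cup\sigma(L)=\R$: then, since $\sigma_e(A)=\sigma_e(L)$ by Theorem~\ref{theorem-equal-essential-spectra} and the discrete parts are countable, the closed set $\sigma_e(A)$ contains a dense subset of $\R$ and hence equals $\R$; thus every interval meets $\sigma_e(A)$ and $\mu_A(\Delta)=\mu_L(\Delta)=\infty$, making \eqref{desespd} trivially true. Otherwise I fix a real $\zeta_0\in\rho(A)\cap\rho(L)$ and set $R_A:=(A-\zeta_0 I)^{-1}$, $R_L:=(L-\zeta_0 I)^{-1}$. Because $A^*=A$, $L^*=L$ and $\zeta_0\in\R$, these lie in $\cA B(\H)$ and are self-adjoint, with $R_L-R_A=F$ of rank $m$ (independent of $\zeta_0$). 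By Lemma~\ref{lem:depolr00} and Remark~\ref{posr00}, $R_A$ acts as $(A_A-\zeta_0 I)^{-1}$ on $(\mul A)^{\perp}$ and as $0$ on $\mul A$, so $\ker R_A=\mul A$ and the spectral measure of $R_A$ is the image of $E_{A_A}$ under $\phi$ together with an atom at $0$ carrying $\mul A$. Writing $\Delta':=\phi(\Delta)$---a bounded interval when $\zeta_0\notin\overline\Delta$, the complement of one when $\zeta_0\in\Delta$, but in either case a Borel set with $0\notin\overline{\Delta'}$---one obtains $\mu_A(\Delta)=\dim\ran E_{R_A}(\Delta')$ and likewise $\mu_L(\Delta)=\dim\ran E_{R_L}(\Delta')$ with the \emph{same} $\Delta'$. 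The exclusion of $0$ from $\Delta'$ is exactly what neutralizes the different multivalued parts.

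It then remains to prove the purely operator-theoretic inequality
\[
\bigl|\dim\ran E_{R_L}(\Delta')-\dim\ran E_{R_A}(\Delta')\bigr|\le m
\]
for bounded self-adjoint $R_A,R_L$ on $\H$ whose difference $F$ has rank $m$, with $\Delta'$ an interval or the complement of one. I would split the self-adjoint finite-rank $F$ as $F=F_+-F_-$ with $F_\pm\ge0$ and $\rank F_+ +\rank F_- = m$, and interpolate through the intermediate operator $R_A+F_+$, using $R_A\le R_A+F_+$ and $R_A+F_+\ge R_L$. Each step is a semidefinite perturbation, so eigenvalue interlacing (min--max) shows that the counting function of $\Delta'$---being the difference or the sum of the two half-line counting functions at the endpoints of $\Delta'$---moves by at most $\rank F_+$ in the first step and at most $\rank F_-$ in the second, the two bounds adding to $m$. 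This is the content of \cite[Thm.\,9.3.3]{MR1192782}, which applies directly once the reduction to a common space has been made. Combining with the previous paragraph gives \eqref{desespd}.

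The main difficulty is twofold. The conceptual obstacle is the mismatch of the multivalued parts; the Möbius transform resolves it by sending them to the common excluded point $0$, which is precisely why one must work with resolvents rather than with $A_A$ and $L_L$ on their distinct spaces. The technical obstacle is obtaining the \emph{sharp} constant $\rank F$ rather than $2\rank F$: passing through the real and imaginary parts of a resolvent at a non-real point would double the rank, so it is essential to choose a \emph{real} regular point $\zeta_0$ (whence $F$ is self-adjoint) and to decompose $F$ into its positive and negative parts, letting monotonicity furnish the bound $\rank F_+ +\rank F_-=m$. The passage from finite to infinite counts (eigenvalues accumulating at an endpoint of $\Delta'$) is handled by a routine limit over shrinking endpoints, which simultaneously shows both sides are infinite together.
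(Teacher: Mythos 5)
Your proposal is correct in substance, but it follows a genuinely different route from the paper. You conjugate the problem through the resolvent at a \emph{real} point $\zeta_0\in\rho(A)\cap\rho(L)$, turning $A$ and $L$ into bounded selfadjoint operators $R_A,R_L$ on the common space $\H$ whose difference is the selfadjoint finite-rank $F$, pushing both multivalued parts onto the excluded eigenvalue $0$, and then invoking (or re-deriving via $F=F_+-F_-$ and min--max interlacing) the classical counting estimate of \cite[Thm.\,9.3.3]{MR1192782}. The paper instead adapts the Birman--Solomjak argument directly at the level of relations: assuming $\mu_L(\Delta)>\mu_A(\Delta)+\rank F$, a dimension count produces a nonzero $f\in E_{L_L}(\Delta)(\mul L)^{\perp}\cap\ker F$ orthogonal to $E_{A_{A}}(\Delta)(\mul A)^{\perp}$, so that $(A-\zeta I)^{-1}$ and $(L-\zeta I)^{-1}$ agree on $f$; the spectral theorem applied to $L_L$ gives $\no{f+(\zeta-\gamma)g}<\xi\no g$ while the decomposition $f=f_1+f_2$ along $(\mul A)^{\perp}\oplus\mul A$ gives the reverse inequality, a contradiction. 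What each approach buys: the paper's argument works at any $\zeta\in\rho(A)\cap\rho(L)$ (in particular non-real, which always exists), so it needs no separate treatment of the case $\sigma(A)\cup\sigma(L)=\R$ and no discussion of infinite counts, and it handles the mismatch of $(\mul A)^{\perp}$ and $(\mul L)^{\perp}$ head-on; your reduction is more modular and delegates the combinatorics to the known operator theorem, at the cost of the degenerate-case preamble, the mild extension of the cited theorem to complements of intervals (which you correctly resolve via half-line counting functions), and a only-sketched limiting argument for infinite multiplicities. These remaining points are routine, so I regard your proof as essentially complete, just organized around a different pivot.
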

\begin{proof}
  Only one inequality in \eqref{desespd} needs to be proven.  If
  $\mu_L(\Delta)>\mu_A(\Delta)+\rank F$, then there exists a non-zero
  element $f\in E_{L_L}(\Delta)(\mul L)^{\perp}\cap\ker F$ such that
  $f\perp E_{A_{A}}(\Delta)(\mul A)^{\perp}$. This implies that there
  also exists
 \begin{align}\label{samelf}
  \vE fg\in (A-\zeta I)^{-1}\cap (L-\zeta I)^{-1}\,.
 \end{align}
 Due to \eqref{posr}, $\vE fg\in (L_{L}-\zeta I)^{-1}$. Observe that
 \begin{equation}
   \label{eq:trick-with-proj}
 \vE fg\in (L_{L} - \zeta I)^{-1}E_{L_{L}}(\Delta)=
 E_{L_{L}}(\Delta)(L_{L}-\zeta I)^{-1}
\end{equation}
which implies that $g\in  E_{L_{L}}(\Delta)(\mul
L)^{\perp}$. Thereupon, if one defines $\Delta:=(\alpha,\,\beta)$ and 
$\gamma:=(\alpha+\beta)/2$, then
\begin{equation*}
\vE g{f+(\zeta-\gamma)g}\in (L_{L}-\gamma
I)\,.
\end{equation*}
By the spectral theorem, one concludes
\begin{align}\label{Ebtes}
\begin{split}
\no{f+(\zeta-\gamma)g}^2&=\no{( L_{L}-\gamma I)g}^2\\
 &=\int_{|t-\gamma|<\xi}(t-\gamma)^2d\inner{g}{E_{L_{L}}(t)g}<\xi^2\no g^2\,,
\end{split}
\end{align}
where $\xi:=(\alpha-\beta)/2$. Since $f\perp E_{A_{A}}(\Delta)(\mul A)^{\perp}$, one obtains that
\beao f\in\oP{[E_{A_{A}}(\R\backslash\Delta)(\mul A)^{\perp}]}{[\mul
  A]}\,.\eeao Thus, one has the decomposition $f=f_1+f_2$, where
$f_1\in E_{A_{A}}(\R\backslash\Delta)(\mul A)^{\perp}$ and $f_2\in
\mul A$.  On the basis of the fact that $g\in (\mul A)^{\perp}$, which
follows from \eqref{samelf}, one has $\vE{f_1}g\in ( A_{A}-\zeta
I)^{-1}$. Recurring to the course of reasoning in
\eqref{eq:trick-with-proj}, one establishes as before that
\begin{equation*}
\vE g{f_1+(\zeta-\gamma)g}\in (A_{A}-\gamma I)\,.
\end{equation*}
Therefore
\begin{align*}
 \nonumber\no{f+(\zeta-\gamma)g}^2
&=\no{f_1+f_2+(\zeta-\gamma)g}^2\\ \nonumber&=\no{f_2}^2+\no{f_1+(\zeta-\gamma)g}^2\\
 \nonumber&\geq\no{f_1+(\zeta-\gamma)g}^2 
 \\  \nonumber &=\no{(A_{A}-\gamma I)g}^2\\
 &=\int_{|t-\gamma|\geq\xi}(t-\gamma)^2d\inner{g}{E_{A_{A}}(t)g}
\geq\xi^2\no g^2\,,
\end{align*}
which contradicts \eqref{Ebtes}.
\end{proof}

\begin{remark}
  \label{rem:same-essential-sa-extension}
  As a consequence of Lemma \ref{difexrm} and Theorem
  \ref{theorem-equal-essential-spectra}, if $S$ is a closed symmetric
  relation with finite and equal deficiency indices, then its
  selfadjoint extensions have the same essential spectrum.
\end{remark}
The next definition is based on  the analogous notion for
operators given in \cite[Sec.\,9.3]{MR1192782}.
\begin{definition}
  \label{def:spectral-lacuna}
  The interval $\Delta=(\gamma-\xi,\gamma+\xi)$ is a spectral lacuna
  (or simply lacuna)
  of a symmetric relation $S$ when
  \beao
 \no f\leq\frac1{\xi}\no{g-\gamma f}\quad\text{ for all }\vE fg\in S\,.
\eeao
\end{definition}
Notice that a spectral lacuna consists of quasi-regular points of $S$
and each quasi-regular point of $S$ belongs to a lacuna of $S$.

The following result is a generalization of \cite[Thm.\,9.3.6]{MR1192782}.

\begin{theorem}\label{hatrels}
  Let $A$ be a selfadjoint extension of a closed symmetric relation
  $S$. If $\Delta$ is a lacuna of $S$, then
  $\mu_A(\Delta)\leq \eta_{-}(S).$ \end{theorem}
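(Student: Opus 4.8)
The plan is to reduce to the operator part $A_{A}$ and then run a dimension count in the same spirit as the proof of Theorem~\ref{eqespecd}, playing the lacuna estimate against the spectral theorem for the selfadjoint operator $A_{A}$. Since $A$ is selfadjoint, $A_{A}$ is a selfadjoint operator in $(\mul A)^{\perp}$, and by definition $\mu_A(\Delta)=\dim M$ with $M:=E_{A_{A}}(\Delta)(\mul A)^{\perp}$. As $\Delta=(\gamma-\xi,\gamma+\xi)$ is bounded, one has $M\subset\dom A_{A}$, so $\mathcal M:=\{\vE{f}{A_{A}f}\,:\,f\in M\}$ is a subspace of $A_{A}\subset A$ with $\dim\mathcal M=\mu_A(\Delta)$, the map $f\mapsto\vE{f}{A_{A}f}$ being injective. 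Arguing by contradiction, I would assume $\mu_A(\Delta)>\eta_{-}(S)$.

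The first step is to produce a nonzero element of $\mathcal M$ that lies in $S$. Because $A$ is a selfadjoint extension of the closed symmetric relation $S$, its existence forces $\eta_{+}(S)=\eta_{-}(S)$, and von Neumann's extension theory for symmetric relations gives $\dim(\oM AS)=\eta_{-}(S)$. Writing the orthogonal decomposition $A=\oP{S}{(\oM AS)}$ and projecting $\mathcal M$ orthogonally onto $\oM AS$, the kernel of this projection is exactly $\mathcal M\cap S$; since $\dim\mathcal M=\mu_A(\Delta)>\eta_{-}(S)=\dim(\oM AS)$, that kernel is nontrivial, so there is $f\in M$ with $f\neq0$ and $\vE{f}{A_{A}f}\in S$. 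The contradiction then comes from a two-sided estimate on $\no{A_{A}f-\gamma f}$: on one hand, since $\vE{f}{A_{A}f}\in S$ and $\Delta$ is a lacuna of $S$, Definition~\ref{def:spectral-lacuna} yields $\xi\no f\leq\no{A_{A}f-\gamma f}$; on the other hand, $E_{A_{A}}(\Delta)f=f$ with $f\neq0$, so the spectral theorem for $A_{A}$ gives $\no{A_{A}f-\gamma f}^{2}=\int_{\Delta}(t-\gamma)^{2}\,d\inner{f}{E_{A_{A}}(t)f}<\xi^{2}\no f^{2}$, exactly as in \eqref{Ebtes}. Here the inequality is strict because $(t-\gamma)^{2}<\xi^{2}$ throughout the open interval $\Delta$ while the measure $d\inner{f}{E_{A_{A}}(t)f}$ has positive total mass $\no f^{2}$ concentrated in $\Delta$. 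These two bounds are incompatible, which finishes the argument.

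The main obstacle is the dimension formula $\dim(\oM AS)=\eta_{-}(S)$, the relation analogue of the operator identity $\dim(\dom A/\dom\overline S)=n$ that underlies \cite[Thm.\,9.3.6]{MR1192782}; establishing it requires the von Neumann description of the selfadjoint extensions of $S$ and the observation that such an extension exists only when $\eta_{+}(S)=\eta_{-}(S)$. Everything else should be routine: the passage to $A_{A}$ merely repeats the reduction already exploited in Remark~\ref{posr00} and Theorem~\ref{eqespecd}, and the strict spectral estimate is the computation in \eqref{Ebtes}. I expect the multivalued parts to cause no difficulty, provided one works throughout inside the graph $A\subset\HH$ and applies the lacuna bound to the pair $\vE{f}{A_{A}f}\in S$ rather than to the vector $f$ alone, so that the possibly nontrivial $\mul S$ and $\mul A$ never enter the estimates directly.
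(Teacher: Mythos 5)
Your argument is correct, but it takes a genuinely different route from the paper's. The paper reduces everything to the operator setting: it shows that $S_{A}=\pE S$ is a closed symmetric operator in $(\mul A)^{\perp}$ whose selfadjoint operator extension is $A_{A}$, that $\Delta$ remains a lacuna of $S_{A}$, and that $\eta_{-}(S_{A})\leq\eta_{-}(S)$, and then simply invokes the operator version \cite[Thm.\,9.3.6]{MR1192782}. You instead reprove that operator theorem directly inside the graph of $A$: the dimension count against $\oM AS$ produces a nonzero $\vE f{A_{A}f}\in S$ with $E_{A_{A}}(\Delta)f=f$, and the lacuna bound collides with the strict spectral estimate of \eqref{Ebtes}. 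The individual steps check out: the kernel of the projection of $\mathcal M$ onto $\oM AS$ is indeed $\mathcal M\cap S$ because $A=\oP{S}{(\oM AS)}$, and the strictness of the spectral integral is correctly justified. The one ingredient you import without proof, $\dim(\oM AS)=\eta_{-}(S)$, is exactly the first von Neumann formula for closed symmetric relations, $S^{*}=\oP{S}{(\oP{\Nk i{S^{*}}}{\Nk{-i}{S^{*}}})}$, together with the parametrization of selfadjoint extensions by unitaries $\Nk i{S^{*}}\to\Nk{-i}{S^{*}}$; this is standard in the relation setting (see \cite{riossilva-expos}), and it can also be obtained without von Neumann formulas by observing that for $\zeta\in\C_{-}$ the map $\vE fg\mapsto g-\zeta f$ is a linear bijection of $A$ onto $\H$ carrying $S$ onto $\ran(S-\zeta I)$, whence $\dim(\oM AS)=\dim[\oM{\H}{\ran(S-\zeta I)}]=\eta_{-}(S)$. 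What your approach buys is a self-contained proof that never leaves the relation framework and makes the mechanism explicit; what the paper's reduction buys is brevity and reuse of the classical result. Your closing observation that the multivalued parts never intervene is accurate, precisely because $\dom A\subset(\mul A)^{\perp}$ forces $\mathcal M\subset\pE A\subset A$.
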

\begin{proof}
  From \eqref{posr}, it follows that
  $\dom S\subset \dom A\subset (\mul A)^{\perp}$. Thus \beao\pE
  S\subset \oP{(\mul A)^{\perp}}{(\mul A)^{\perp}}\,.\eeao So
  $S_{A}=\pE S$ and this implies that $S_{A}$ is a closed symmetric
  operator and $A_{A}$ is its selfadjoint extension, which is also an
  operator. Since $S_{A}\subset S$, $\Delta$ is also a lacuna of
  $S_{A}$. We use \eqref{lem:depolr00} to obtain that
 \begin{align*}
 \ran(S-\zeta I)&=\oP{\ran (\pE S-\zeta I)}{\mul S}\\
 &=\oP{\ran (S_{A}-\zeta I)}{\mul S}\subset \oP{\ran (S_{A}-\zeta I)}{\mul A},
 \end{align*}
whence 
\begin{align*}
\oM{\H}{ \ran(S-\zeta I)}&\supset\oM{\H}{(\oP{\ran (S_{A}-\zeta I)}{\mul A})}\\
&=\oM{(\oM{\H}{\mul A})}{\ran (S_{A}-\zeta I)}\\&=\oM{(\mul A)^{\perp}}{\ran (S_{A}-\zeta I)}.
\end{align*}
Thus
\begin{align*}
\eta_{-}(S)&=\dim \oM{\H}{ \ran(S-\zeta I)}\\
&\leq\dim\oM{(\mul A)^{\perp}}{\ran (S_{A}-\zeta I)}\\
&=\eta_{-}(S_{A}).
\end{align*}
Using the fact that the theorem holds for operators, one concludes
that \beao\mu_A(\Delta)\leq \eta_{-}(S_{A})\leq \eta_{-}(S)\,.\eeao
\end{proof}

\begin{remark}\label{espext} Under the conditions of the previous
  result, if $\eta_{-}(S)=n<\infty$, then, for any
  $\Delta\subset\hat\rho(S)$, the spectrum of $A$ in
  $\Delta$ is discrete and its multiplicity is at most $n$. This is
  so because every closed bounded subinterval of $\Delta$ can be
  covered by a finite number of lacunae of $\hat\rho(S)$ (see
  \cite[Cor.\,5.2]{MR2993376}).
\end{remark}
\begin{corollary}\label{carespex}
  Let $S$ be a closed symmetric relation with indices
  $\eta_{-}(S)=\eta_{+}(S)=1$. Suppose that $A$ and $L$ are distinct
  selfadjoint extensions of $S$ and $\Delta\subset\hat\rho(S).$ Then
  the spectra of $A$ and $L$ in $\Delta$ are discrete, simple and
  alternating.
\end{corollary}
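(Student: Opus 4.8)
The plan is to dispatch discreteness and simplicity using the material already in place and to reserve the real work for the alternation. Since $\eta_{-}(S)=1<\infty$ and $\Delta\subset\hat\rho(S)$, Remark~\ref{espext} applies verbatim to each of the selfadjoint extensions $A$ and $L$: the spectrum of each in $\Delta$ is discrete and of multiplicity at most $\eta_{-}(S)=1$, hence simple. To prepare for the counting argument I first fix the rank. By Lemma~\ref{difexrm}, $\rank F\le\eta_{-}(S)=1$; on the other hand $\rank F=0$ would give $(A-\zeta I)^{-1}=(L-\zeta I)^{-1}$ and so $A=L$, contrary to hypothesis. Thus $\rank F=1$, which is the figure I will feed into Theorem~\ref{eqespecd}.

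Next I would show that $A$ and $L$ have no common eigenvalue in $\Delta$, since this is what allows the eigenvalues of the two relations to be separated by intervals. Fix $\lambda\in\Delta$ and observe that, because $\Delta$ joins $\C_+$ and $\C_-$ within a single connected component of $\hat\rho(S)$, the deficiency index is constant there and equal to $\eta_{-}(S)=1$; hence $\dim\ker(S^{*}-\lambda I)=1$. If $\lambda$ were an eigenvalue of both $A$ and $L$, then, since $A,L\subset S^{*}$, the two eigenvectors would both lie in the one-dimensional space $\ker(S^{*}-\lambda I)$ and so be proportional; after rescaling, a single element $\vE{f}{\lambda f}$ belongs to $A\cap L$. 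As $\lambda\in\hat\rho(S)$ is not an eigenvalue of $S$, this element lies outside $S$. By the von Neumann extension theory for symmetric relations (\cite{riossilva-expos}) one has $\dim(A/S)=\dim(L/S)=\eta_{-}(S)=1$, so each of $A$ and $L$ is spanned by $S$ together with $\vE{f}{\lambda f}$, forcing $A=L$, a contradiction. Hence the eigenvalues of $A$ and of $L$ in $\Delta$ are pairwise distinct.

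Finally I would read off the alternation from Theorem~\ref{eqespecd} with $\rank F=1$. Because the spectra are discrete and simple in $\Delta$, for every bounded open interval $J\subset\Delta$ whose endpoints are eigenvalues of neither relation, $\mu_A(J)$ (resp. $\mu_L(J)$) equals the number of eigenvalues of $A$ (resp. $L$) in $J$, and Theorem~\ref{eqespecd} yields $\lvert\mu_A(J)-\mu_L(J)\rvert\le1$. Consequently no such $J$ can contain two eigenvalues of one relation and none of the other; since, by the previous step, no eigenvalue is shared, the eigenvalues of $A$ and $L$ in $\Delta$ must strictly alternate. Concretely, isolating two consecutive $A$-eigenvalues in a small $J$ forces at least one $L$-eigenvalue between them, and taking $J$ inside the gap between them forbids a second, so exactly one $L$-eigenvalue lies between consecutive $A$-eigenvalues, and symmetrically. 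The main obstacle, and the step I would write out most carefully, is precisely the absence of common eigenvalues: it is what permits the isolating intervals, and it relies on the one-dimensionality of $\ker(S^{*}-\lambda I)$ on the real gap together with the dimension count $\dim(A/S)=1$ for a selfadjoint extension.
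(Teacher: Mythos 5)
Your proof is correct and follows essentially the same route as the paper: discreteness and simplicity from Remark~\ref{espext}, the identification $\rank F=1$ via Lemma~\ref{difexrm} plus the observation that $\rank F=0$ would force $A=L$, disjointness of the point spectra in $\Delta$, and finally the counting argument from Theorem~\ref{eqespecd} applied to small intervals around consecutive eigenvalues. The only divergence is the disjointness step: the paper invokes the uniqueness clause of Proposition~\ref{mdrofsrn}, whereas you reprove that clause directly in the rank-one case from $\dim\ker(S^{*}-\lambda I)=1$ on $\Delta$ together with the von Neumann count $\dim(A/S)=\eta_{-}(S)=1$; both are valid, and your version is merely a self-contained rendering of the same fact.
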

\begin{proof}
  By Remark \ref{espext}, the spectra of $A$ and $L$ in $\Delta$ are
  discrete and simple.  Note that $\ran F=\{0\}$ implies $A=L$, so
  $\rank F\ge 1$. On the other hand, since $\eta_{-}(S)=1$,
  it follows from Lemma~\ref{difexrm} that $\rank F\le 1$. Thus, \eqref{desespd} yields
\begin{align}\label{desc1}
 \mu_A(\Delta)-1\leq \mu_L(\Delta)\leq \mu_A(\Delta)+1\,.
 \end{align}
 Moreover, since $A\neq L$, it follows from Proposition~\ref{mdrofsrn}
 that the spectra of $A$ and $L$ have empty
 intersection in $\Delta$.  Let $\lambda_{1},\, \lambda_{2}$ be
 neighbouring eigenvalues of $A$ in $\Delta$. By \eqref{desc1}, one
 has $\mu_L((\lambda_{1},\, \lambda_{2}))\leq 1$. If there are no
 spectral points of $L$ in $[\lambda_{1},\, \lambda_{2}]$, then the
 same is true in some open interval
 $\partial\supset[\lambda_{1},\, \lambda_{2}]$. Recurring again to
 \eqref{desc1}, one arrives at
$$0=\mu_L(\partial)\geq \mu_A(\partial)-1\geq1\,,$$
which is a contradiction. Therefore the spectra of $A$ and $L$ are
alternating. \end{proof}
The next result complements Proposition \ref{mdrofsrn}. It follows
directly from Corollary \ref{carespex}.
\begin{corollary}\label{corollary:pairwise-interlaced-spectra}
  Suppose that $S$ is closed symmetric relation with indices
  $\eta_{-}(S)=\eta_{+}(S)=1$. If $\reals\subset\hat\rho(A)$, then the
  spectra of the selfadjoint extensions of $S$ are pairwise interlaced
  and consist solely of isolated eigenvalues of multiplicity one.
\end{corollary}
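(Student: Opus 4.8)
The plan is to reduce the whole statement to Corollary~\ref{carespex} by noting that, under the hypothesis $\R\subset\hat\rho(S)$, the entire spectrum of every selfadjoint extension of $S$ already lies in $\hat\rho(S)$. First I would recall from Proposition~\ref{equaa}-\eqref{A02} that a selfadjoint relation has its spectrum contained in $\R$; hence any selfadjoint extension $A$ of $S$ satisfies $\sigma(A)\subset\R\subset\hat\rho(S)$. Since $\eta_{-}(S)=\eta_{+}(S)=1$, the relation $S$ admits a family of selfadjoint extensions, so speaking of a \emph{pair} of distinct extensions is meaningful and the word ``pairwise'' has content.

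Next I would fix two distinct selfadjoint extensions $A$ and $L$ of $S$ and apply Corollary~\ref{carespex} with $\Delta=\R$, which is admissible precisely because $\R\subset\hat\rho(S)$. That corollary, invoking exactly the index condition $\eta_{-}(S)=\eta_{+}(S)=1$, asserts that the spectra of $A$ and $L$ in $\R$ are discrete, simple and alternating. By the previous paragraph $\sigma(A)$ and $\sigma(L)$ are contained in $\R$, so these are statements about the full spectra: each of $\sigma(A)$, $\sigma(L)$ consists of isolated eigenvalues of multiplicity one (discreteness and simplicity), and the two spectra interlace (alternation). As $A$ and $L$ form an arbitrary pair of distinct selfadjoint extensions, the entire family is pairwise interlaced, which is the claim.

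The step I expect to require the most care is the use of Corollary~\ref{carespex} with the unbounded choice $\Delta=\R$, since there discreteness, simplicity and alternation were obtained via lacunae, which are bounded intervals. The remedy is supplied by Remark~\ref{espext}: every closed bounded subinterval of $\hat\rho(S)$ is covered by finitely many lacunae, so the spectrum of each extension can accumulate at no point of $\R$ and its multiplicity never exceeds $\eta_{-}(S)=1$. Discreteness and simplicity are therefore local properties that propagate to all of $\R$, and the alternation argument of Corollary~\ref{carespex}, being a statement about consecutive eigenvalues, holds on each bounded piece and hence globally. With this justification the reduction is complete.
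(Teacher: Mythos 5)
Your proposal is correct and follows the paper's own route exactly: the paper proves this corollary simply by declaring that it ``follows directly from Corollary~\ref{carespex}'' applied with $\Delta=\R$ (reading the hypothesis as $\R\subset\hat\rho(S)$, as you sensibly did). Your additional care about the unboundedness of $\Delta=\R$, resolved via the local covering by lacunae from Remark~\ref{espext}, is a welcome elaboration of a point the paper leaves implicit.
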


\section{\textbf{Examples}}
\label{sec:examples}

Let $J$ be a selfadjoint operator in a separable Hilbert space
$\H$. For a fixed nonzero $\delta \in \H$, consider the restriction
\bea\label{eq:operator-Bdelta} B_{\delta}:=J\rE{\oM{\dom J}{\Span
    \{\delta\}}}\,.  \eea The operator $B_{\delta}$ is closed,
non-densely defined and symmetric. One verifies that \beao
B_{\delta}=J\cap\left(\Span\llb\vE 0\delta\rrb\right)^{*}\,.  \eeao
Observe that $J$ and $\Span\llb\vE 0\delta\rrb$ are linearly
independent so, in view of \cite[Sec.\,, 2]{MR1430397}, one obtains
that $B_{\delta}$ has indices
$\eta_{-}(B_{\delta})=\eta_{+}(B_{\delta})=1$. Moreover, for any
$\tau\in\R\cup\{\infty\}$ there is a unique selfadjoint extension of
$B_{\delta}$ given by \bea\label{exoBop} J(\tau)=\llb \vE f{g+\tau\ip
  {\delta} f\delta} \ :\ \vE fg\in J\rrb,\,\,\, \tau\neq\infty, \eea
and \bea\label{exoBnop} J(\infty)=B_{\delta}\dotplus
\Span\llb\vE{0}{\delta}\rrb\,. \eea When $\tau$ runs through the set
$\R\cup\{\infty\}$, $J(\tau)$ runs through all selfadjoint extensions
of $B_{\delta}$ \cite[Thm.\,2.4]{MR1430397}. Also, by
\cite[Eq.\,2.2]{MR1430397},\beao
B_{\delta}^*=J\dotplus \Span\llb\vE{0}{\delta}\rrb.  \eeao


A direct consequence of Remark~\ref{rem:same-essential-sa-extension}
is the following assertion.
\begin{proposition}
  \label{prop:same-essential-example1}
  For any nonzero $\delta\in\H$, the essential spectra of all the selfadjoint extensions of
  the symmetric operator $B_{\delta}$ given in \eqref{eq:operator-Bdelta}, are equal.
\end{proposition}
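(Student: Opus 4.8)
The plan is to recognize this as an almost immediate application of Remark~\ref{rem:same-essential-sa-extension}. That remark states that if $S$ is a closed symmetric relation with finite and \emph{equal} deficiency indices, then all its selfadjoint extensions share the same essential spectrum. Thus the entire task reduces to verifying that $B_{\delta}$ fits the hypotheses of that remark, and this verification has already been handed to us in the setup preceding the statement.

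Concretely, first I would recall from the construction in \eqref{eq:operator-Bdelta} that $B_{\delta}$ is closed and symmetric, and that by the cited result \cite[Sec.\,2]{MR1430397} its deficiency indices are $\eta_{-}(B_{\delta})=\eta_{+}(B_{\delta})=1$. These indices are finite and equal, so the hypotheses of Remark~\ref{rem:same-essential-sa-extension} are met with $S=B_{\delta}$. Next, the selfadjoint relations $A$ and $L$ appearing implicitly in Remark~\ref{rem:same-essential-sa-extension} are precisely the selfadjoint extensions $J(\tau)$ and $J(\tau')$ of $B_{\delta}$; since the remark applies to \emph{any} two selfadjoint extensions of $S$, it yields $\sigma_{e}(J(\tau))=\sigma_{e}(J(\tau'))$ for all $\tau,\tau'\in\R\cup\{\infty\}$. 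As the family $\{J(\tau)\}_{\tau\in\R\cup\{\infty\}}$ exhausts all selfadjoint extensions of $B_{\delta}$ by \cite[Thm.\,2.4]{MR1430397}, every selfadjoint extension has the same essential spectrum, which is the assertion.

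The reason Remark~\ref{rem:same-essential-sa-extension} applies so cleanly is worth tracing back to its own proof ingredients: for two selfadjoint extensions $A,L$ of $S$, Lemma~\ref{difexrm} gives $\rank F\leq\eta_{-}(S)<\infty$, so $F$ is a finite-rank operator and hence lies in $S_{\infty}(\H)$; then Theorem~\ref{theorem-equal-essential-spectra} delivers $\sigma_{e}(A)=\sigma_{e}(L)$. In our setting $\eta_{-}(B_{\delta})=1$, so in fact $\rank F\leq 1$ for any pair of extensions, but the mere finiteness suffices. I do not expect any genuine obstacle here; the only point requiring the slightest care is confirming that the two objects being compared really are selfadjoint extensions of the \emph{same} symmetric relation $B_{\delta}$, which is exactly what the parametrization \eqref{exoBop}--\eqref{exoBnop} guarantees. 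In short, the proof is a one-line invocation of Remark~\ref{rem:same-essential-sa-extension} once the index computation for $B_{\delta}$ is noted.
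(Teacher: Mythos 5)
Your proof is correct and follows exactly the paper's route: the paper presents this proposition as "a direct consequence of Remark~\ref{rem:same-essential-sa-extension}", relying precisely on the index computation $\eta_{-}(B_{\delta})=\eta_{+}(B_{\delta})=1$ from the setup and on the fact that $\{J(\tau)\}$ exhausts the selfadjoint extensions. Your additional tracing of the remark back to Lemma~\ref{difexrm} and Theorem~\ref{theorem-equal-essential-spectra} matches the paper's own justification of that remark.
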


Recall that a selfadjoint operator $J$ is said to be simple when there
exists $g\in\H$ such that the linear envelope of the vectors
$E_J(\partial)g$, where $E_J$ is the spectral measure of $J$ and
$\partial$ runs through all intervals of $\R$, is dense in
$\H$ (see \cite[Sec.\,69]{MR1255973}). The vector $g$ is then called a
generating element of $J$.
\begin{proposition}
  Suppose that $J$ is simple and $\delta$ is a generating element of it. If
  $\partial$ is an interval such that $\partial\cap\sigma_{e}(J)=\emptyset$, then
  $\partial\subset\hat\rho(B_{\delta})$.
\end{proposition}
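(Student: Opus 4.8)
The plan is to verify directly, for each $\lambda\in\partial$, that $(B_\delta-\lambda I)^{-1}$ is bounded, i.e.\ that $\lambda\in\hat\rho(B_\delta)$. Since $B_\delta\subset J$, every $f\in\dom B_\delta=\oM{\dom J}{\Span\{\delta\}}$ satisfies $(B_\delta-\lambda I)f=(J-\lambda I)f$, so $(B_\delta-\lambda I)^{-1}$ is bounded precisely when $J-\lambda I$ is bounded below on $\dom B_\delta$. Equivalently, $\lambda\in\hat\rho(B_\delta)$ fails iff there is a normalized sequence $\{f_n\}\subset\dom B_\delta$ (so $f_n\in\dom J$ and $\ip\delta{f_n}=0$) with $(J-\lambda I)f_n\to0$. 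I will rule out such a sequence for every $\lambda\in\partial\subset\R$.

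First I would extract two structural facts from the hypotheses. Because $\partial\cap\sigma_e(J)=\emptyset$ and $\sigma_e(J)=\sigma_c(J)\cup\sigma_p^{\infty}(J)$, each $\lambda\in\partial$ either lies in $\rho(J)$ or is an isolated point of $\sigma(J)$ whose eigenspace is finite-dimensional. Simplicity of $J$ with generating element $\delta$ then forces this eigenspace to be one-dimensional: if $\ran E_J(\{\lambda\})$ had dimension at least two, a unit vector in it orthogonal to $E_J(\{\lambda\})\delta$ would be orthogonal to every $E_J(\omega)\delta$ ($\omega$ an interval), contradicting the density defining simplicity. For the same reason $E_J(\{\lambda\})\delta\neq0$, whence the normalized eigenvector $v$ spanning $\ran E_J(\{\lambda\})$ satisfies $\ip\delta v\neq0$. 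Put $P:=E_J(\{\lambda\})$, a projection of rank $0$ (when $\lambda\in\rho(J)$) or $1$.

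Now suppose a normalized sequence $\{f_n\}$ as above exists, and write $f_n=Pf_n+g_n$ with $g_n:=(I-P)f_n$. Since $\lambda$ is isolated in, or outside, $\sigma(J)$, letting $d>0$ be the distance from $\lambda$ to $\sigma(J)\setminus\{\lambda\}$, the spectral theorem for $J$ gives $\no{(J-\lambda I)g_n}\ge d\,\no{g_n}$, because the measure $\ip{g_n}{E_J(\cdot)g_n}$ is supported away from $\lambda$. As $(J-\lambda I)f_n=(J-\lambda I)g_n\to0$, we get $g_n\to0$, and $\no{f_n}=1$ forces $\no{Pf_n}\to1$. If $P=0$ this already contradicts $\no{f_n}=1$. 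If $P$ has rank one, write $Pf_n=a_n v$ with $|a_n|\to1$; the side condition $\ip\delta{f_n}=0$ yields $a_n\ip\delta v=-\ip\delta{g_n}$, so $|a_n|\,\abs{\ip\delta v}\le\no\delta\,\no{g_n}\to0$, contradicting $|a_n|\to1$ and $\ip\delta v\neq0$. In either case no such sequence exists, so $\lambda\in\hat\rho(B_\delta)$, and therefore $\partial\subset\hat\rho(B_\delta)$.

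The crux of the argument is the passage from simplicity to the two facts that each eigenspace met by $\partial$ is one-dimensional and that its eigenvector is non-orthogonal to the generating element $\delta$; once these are in hand, the Weyl-sequence contradiction obtained from the spectral decomposition of $J$ together with the constraint $\ip\delta{f_n}=0$ is routine.
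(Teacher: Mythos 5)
Your proof is correct, and it rests on the same crux as the paper's: the hypothesis $\partial\cap\sigma_e(J)=\emptyset$ reduces matters to isolated, finite-multiplicity eigenvalues of $J$, and simplicity with generating element $\delta$ forces every such eigenvector to be non-orthogonal to $\delta$, which is incompatible with $\dom B_\delta\perp\delta$. The execution differs, though. The paper argues by contradiction at the level of exact spectra: it assumes $\zeta\in\partial\cap\hat\sigma(B_\delta)$, rules out $\zeta\in\sigma_c(B_\delta)$ so that $\zeta$ must be a genuine eigenvalue of $B_\delta$ (hence of $J$), and then derives the contradiction from a single exact eigenvector. You instead work directly with the definition of $\hat\rho$ as boundedness below of $J-\lambda I$ on $\dom B_\delta$, and run a Weyl-sequence argument: the spectral gap $d>0$ forces an approximate eigensequence to concentrate on $\ran E_J(\{\lambda\})$, after which the constraint $\ip{\delta}{f_n}=0$ kills it. Your route is longer but more self-contained and quantitative; in particular it bypasses the paper's somewhat terse step ``$\zeta\in\sigma_p(B_\delta)$, otherwise $\zeta\in\sigma_p^{\infty}(J)$'', which requires separately knowing that $\ran(B_\delta-\zeta I)$ is closed when $\zeta\in\sigma_d(J)\cup\rho(J)$. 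You also make explicit (and prove) the auxiliary facts that simplicity forces $\dim\ran E_J(\{\lambda\})\le 1$ and $E_J(\{\lambda\})\delta\neq0$, which the paper uses implicitly in asserting $\ip f\delta\neq0$ for nonzero $f$ in the eigenspace.
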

\begin{proof}
  Assume $\zeta\in \partial\cap\hat\sigma(B_{\delta})$.
  Then $\zeta\in\sigma(J)$ and, since
  $\partial\cap\sigma_{e}(J)=\emptyset$,
  $\zeta\in\sigma_{d}(J)$. Moreover $\zeta \in
  \sigma_{p}(B_{\delta})$, otherwise
  $\zeta\in\sigma_{p}^{\infty}(J)\subset\sigma_{e}(J)$. Therefore \beao
  \ker(B_{\delta}-\zeta I)=E_{J}\{\zeta\}\H\,.  \eeao
  Since $\delta$ is a generating element of $J$, one has
   $\ip f\delta\neq0$ for every nonzero $f\in
   \ker(B_{\delta}-\zeta I)$. This contradicts the fact that
   $\delta\perp\dom B_{\delta}$. Therefore
   $\partial\cap\hat\sigma(B_{\delta})$ is empty which yields that 
   $\partial\subset\hat \rho(B_{\delta})$.
\end{proof}
As a consequence of the last result, if the spectrum of $J$ is purely
discrete and $\delta$ is a cyclic vector of it, then, by Corollary
\ref{corollary:pairwise-interlaced-spectra}, the spectra of the
extensions \eqref{exoBop} and \eqref{exoBnop} are pairwise interlaced
and consist solely of isolated eigenvalues of multiplicity one. Note
that this applies to $J$ being a selfadjoint Jacobi operator with
discrete spectrum and $\delta=\delta_1$, where
$\{\delta_{k}\}_{k\in\N}$ is the canonical basis in $l_2(\N)$.

Now suppose that $\{\delta_{k}\}_{k\in\N}$ is an orthonormal basis of
$\H$ and consider the restriction \beao S:=J\rE{\oM{\dom J}{\Span
    \{\delta_{1},\delta_{2}\}}}\,.  \eeao Clearly, by \eqref{exoBnop}
one has that \bea\label{selfadjoint-of-S}
J_{\delta_{1}}:=B_{\delta_{1}}\dotplus
\Span\llb\vE{0}{\delta_{1}}\rrb\,;\quad
J_{\delta_{2}}:=B_{\delta_{2}}\dotplus \Span\llb\vE{0}{\delta_{2}}\rrb
\eea are selfadjoint extensions of $S$ and they do not have a common
multivalued part. Let us show that the selfadjoint relations
\eqref{selfadjoint-of-S} have the same essential spectrum. One
computes that \beao S=J\cap\left(\Span\llb\vE 0{\delta_{1}},\vE
  0{\delta_{2}}\rrb\right)^{*}\,.  \eeao According to
\cite[Lem.\,.5.1]{MR1077947}, the indices of $S$ are
$\eta_{-}(S)=\eta_{+}(S)=2$. Therefore, by
Remark~\ref{rem:same-essential-sa-extension}, the extensions
\eqref{selfadjoint-of-S} have the same essential spectrum.

In the first example, the results of Section~\ref{sec:Perturbations}
are applied to selfadjoint extensions of the operator
$B_{\delta}$. One of these selfadjoint extensions is a relation. In
the second example, two selfadjoint extensions of the operator $S$ are
considered. These extensions are relations with different multivalued
parts.
\\[1cm]
\noindent{\bf Acknowledgments.}  This research was supported by
UNAM-DGAPA-PAPIIT IN110818 and SEP-CONACYT CB-2015 254062. Part of
this work was carried out while LOS was on sabbatical leave from UNAM
with the support of PASPA-DGAPA-UNAM.

\bibliographystyle{amsplain}

\begin{thebibliography}{99}
\def\cprime{$'$} \def\lfhook#1{\setbox0=\hbox{#1}{\ooalign{\hidewidth
  \lower1.5ex\hbox{'}\hidewidth\crcr\unhbox0}}}
\bibitem{MR1255973}
N.~I. Akhiezer and I.~M. Glazman,
\newblock {\em Theory of linear operators in {H}ilbert space},
\newblock Dover Publications, New York, 1993.

\bibitem{MR2481074}
T.~Y. Azizov, J.~Behrndt, P.~Jonas, and C.~Trunk,
\newblock \emph{Compact and finite rank perturbations of closed linear operators and
  relations in {H}ilbert spaces},
\newblock Integr. Equ. Oper. Theory \textbf{63} (2009), no. 2, 151--163.

\bibitem{MR3057107}
T.~Y. Azizov, A.~Dijksma, and G.~Wanjala,
\newblock \emph{Compressions of maximal dissipative and self-adjoint linear relations
  and of dilations},
\newblock Linear Algebra Appl. \textbf{439} (2013), no. 3, 771--792.

\bibitem{MR1192782}
M.~S. Birman and M.~Z. Solomjak,
\newblock {\em Spectral theory of selfadjoint operators in {H}ilbert space},
\newblock Mathematics and its Applications (Soviet Series), D. Reidel
  Publishing Co., Dordrecht, 1987.

\bibitem{MR1631548}
R.~Cross,
\newblock {\em Multivalued linear operators}, volume \textbf{213} of {\em Monographs and
  Textbooks in Pure and Applied Mathematics},
\newblock Marcel Dekker, New York, 1998.

\bibitem{MR1077947}
A.~Dijksma, H.~S.~V. de~Snoo, and A.~A. El~Sabbagh,
\newblock \emph{Selfadjoint extensions of regular canonical systems with {S}tieltjes
  boundary conditions},
\newblock J. Math. Anal. Appl. \textbf{152} (1990), no. 2, 546--583.

\bibitem{MR929030}
D.~E. Edmunds and W.~D. Evans,
\newblock {\em Spectral theory and differential operators},
\newblock Oxford Mathematical Monographs, The Clarendon Press, Oxford
  University Press, New York, 1987.

\bibitem{MR1430397}
S.~Hassi and H.~de~Snoo,
\newblock \emph{One-dimensional graph perturbations of selfadjoint relations},
\newblock Ann. Acad. Sci. Fenn. Math. \textbf{22} (1997), no. 1, 123--164.

\bibitem{MR1759823}
S.~Hassi, H.~de~Snoo, and H.~Winkler,
\newblock \emph{Boundary-value problems for two-dimensional canonical systems},
\newblock Integr. Equ. Oper. Theory \textbf{36} (2000), no. 4, 445--479.

\bibitem{new-hassi}
S.~Hassi, A. Sandovici, and H.~de~Snoo,
\newblock \emph{Factorized sectorial relations, their maximal
sectorial extensions, and form sums}, to appear in Banach
J. Math. Anal. (arXiv:1903.02816).

\bibitem{MR0407617}
T.~Kato,
\newblock {\em Perturbation theory for linear operators},
\newblock Springer-Verlag, Berlin, second edition, 1976.

\bibitem{riossilva-expos}
J.~I. Rios-Cangas and L.~O. Silva,
\newblock \emph{Dissipative extension theory for linear relations},
\newblock Expo. Math. {DOI}:10.1016/j.exmath.2018.10.004 (in press).

\bibitem{MR3398739}
Y.~Shi,
\newblock \emph{Stability of essential spectra of self-adjoint subspaces under
  compact perturbations},
\newblock J. Math. Anal. Appl. \textbf{433} (2016), no. 2, 832--851.

\bibitem{MR2993376}
Y.~Shi, C.~Shao, and G.~Ren,
\newblock \emph{Spectral properties of self-adjoint subspaces},
\newblock Linear Algebra Appl. \textbf{438} (2013), no. 1, 191--218.

\bibitem{weyl-essential-spectrum-stability}
H.~Weyl,
\newblock \emph{\"{U}ber beschr\"ankte quadratiche {F}ormen, deren {D}ifferenz
  vollsteig ist},
\newblock Rend. Circ. Mat. Palermo \textbf{27} (1909) 373--392.

\bibitem{MR3255523}
D.~L. Wilcox,
\newblock \emph{Essential spectra of linear relations},
\newblock Linear Algebra Appl. \textbf{462} (2014) 110--125.

\bibitem{MR3860685}
G.~Xu and Y.~Shi,
\newblock \emph{Essential spectra of self-adjoint relations under relatively compact
  perturbations},
\newblock Linear Multilinear Algebra \textbf{66} (2018), no. 12, 2438--2467.

\end{thebibliography}

\end{document}